\documentclass[letterpaper,twocolumn,10pt]{article}

\usepackage{mathptmx}
\usepackage[T1]{fontenc}
\usepackage[utf8]{inputenc}
\usepackage{pslatex}
\usepackage[kerning,spacing]{microtype}

\setlength{\textheight}{9.0in}
\setlength{\columnsep}{0.33in}
\setlength{\textwidth}{7.00in}
\setlength{\topmargin}{0.0in}
\setlength{\headheight}{0.0in}
\setlength{\headsep}{0.0in}
\addtolength{\oddsidemargin}{-0.25in}
\addtolength{\evensidemargin}{-0.25in}

\usepackage[dvipsnames]{xcolor}
\usepackage{bm}
\usepackage{algorithm}
\usepackage[noEnd=True]{algpseudocodex}
\usepackage{tikz}
\usepackage{pgfplots}
\usetikzlibrary{
    pgfplots.dateplot,
}
\usepackage{booktabs}
\usepackage{mathtools}
\usepackage{annotate-equations}
\usepackage{amsthm}
\newtheorem{theorem}{Theorem}
\usepackage{svg}
\usepackage{amssymb,amstext} 
\usepackage{array}
\usepackage{xurl}
\usepackage{breakurl}
\usepackage[]{hyperref}
\hypersetup{
  colorlinks,
  linkcolor={green!80!black},
  citecolor={red!70!black},
  urlcolor={blue!70!black}
}
\usepackage{framed}
\colorlet{shadecolor}{red!10}

\begin{document}

\date{}

\title{\Large \bf Over-Threshold Multiparty Private Set Intersection for Collaborative Network Intrusion Detection}

\author{
{\rm Onur Eren Arpaci}\\
University of Waterloo
\and
{\rm Raouf Boutaba}\\
University of Waterloo
\and
{\rm Florian Kerschbaum}\\
University of Waterloo
}

\maketitle

\begin{abstract}
An important function of collaborative network intrusion detection is to analyze the network logs of the collaborators for joint IP addresses. However, sharing IP addresses in plain is sensitive and may be even subject to privacy legislation as it is personally identifiable information. In this paper, we present the privacy-preserving collection of IP addresses. We propose a single collector, over-threshold private set intersection protocol. In this protocol $N$ participants identify the IP addresses that appear in at least $t$ participant's sets without revealing any information about other IP addresses. Using a novel hashing scheme, we reduce the computational complexity of the previous state-of-the-art solution from $\mathcal{O}(M(N \log{M}/t)^{2t})$ to $\mathcal{O}(t^2M\binom{N}{t})$, where $M$ denotes the dataset size. This reduction makes it practically feasible to apply our protocol to real network logs. We test our protocol using joint networks logs of multiple institutions. Additionally, we present two deployment options: a collusion-safe deployment, which provides stronger security guarantees at the cost of increased communication overhead, and a non-interactive deployment, which assumes a non-colluding collector but offers significantly lower communication costs and applicable to many use cases of collaborative network intrusion detection similar to ours.
\end{abstract}

\section{Introduction}
Many cyberattacks are coordinated and target multiple victims. An analysis from the security company Risk Analytics shows that 75\% of attacks on institutions spread to a second institution within one day, and over 40\% spread within one hour \cite{jasper2017us}. It is difficult for standalone intrusion detection systems to identify large scale threats due to lack of contextual information \cite{li2022surveying}. This underscores the need for collaborative network security, where institutions share and compare security data (logs, alerts, indicators) to quickly identify common threats.

A considerable body of work exists on collaborative threat detection~\cite{zhou2007evaluation, zhou2010survey, li2022surveying} as well as privacy preserving collaborative threat detection \cite{lincoln2004privacy, locasto2005towards, davy2023privacy}. However, until now the research focused on sharing alerts, anomalies, and indicators with other institutions. This approach assumes that the individual intrusion detection systems locally filter the raw network data such as IP logs, connections etc., and only report the "out-of-ordinary" events to other institutions. Zabarah et al.~\cite{zabarah2023approach} demonstrated that analyzing raw network data across multiple institutions can reveal attacks that would go undetected by individual institutions alone. However, the biggest challenge for this approach is privacy. The problem is twofold. First, the raw data is significantly more sensitive than security alerts because it contains personally identifiable information. Second, the raw data is orders of magnitude larger than security alerts, making any existing solution computationally unfeasible. We propose a protocol that addresses both of these problems.

Zabarah et al.~\cite{zabarah2023approach} found that a large portion of cyberattacks are typically initiated from a small number of IP addresses that are external to the institutions. Moreover, an external IP address initiating a connection to the institution's internal nodes is an atypical behavior if this connection is not for a publicly advertised service such as a website. Following these characteristics, Zabarah et al. propose a simple criterion with 95\% recall rate for detecting multi-institution attacks. If an external IP connects to at least $t$ institutions within a specific time window, it is classified as malicious, where $t$ is an empirically decided threshold. This solution is simple and effective. However, it requires institutions to share their network logs, which creates the aforementioned privacy concern.

We address the privacy concern with an efficient Over-Threshold Multiparty Private Set Intersection (OT-MP-PSI) protocol. OT-MP-PSI is a cryptographic problem where $N$ participants each hold a set that contains at most $M$ elements. These participants want to know which elements appear in at least threshold $t$ number of sets without revealing any information about elements that do not meet this threshold. Figure~\ref{fig:otmpsivisualization} shows a visualization of the problem. This problem is a direct abstraction of the Zabarah et al.'s~\cite{zabarah2023approach} approach. The institutions do not want to share possibly benign IP addresses, but they are willing to share the IP addresses appearing in multiple institutions, and are likely malicious.

While OT-MP-PSI has other applications such as heavy hitter identification within the network~\cite{kamiyama2007simple, liu2016identifying, cao2009identifying}, private file deduplication on cloud~\cite{jan14dedup} or identifying high-risk individuals in the spread of disease~\cite{bay2022practical}, this paper uses the problem of multi-institution attacks as the main focus and subject of evaluation.

\begin{figure}[h]
    \centering
    \includegraphics[width=0.475\textwidth]{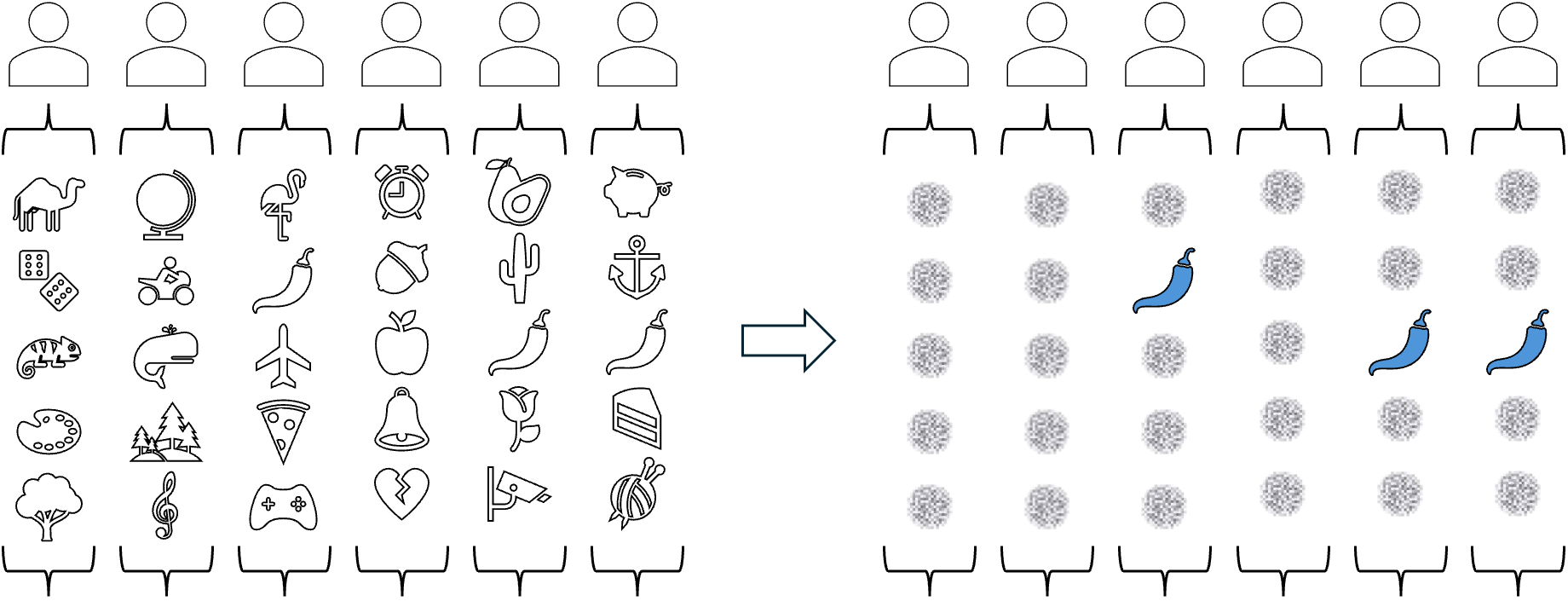}
    \caption{The OT-MP-PSI protocol visualization for parameters ${N=6}, {M=5}, {t=3}$}
    \label{fig:otmpsivisualization}
\end{figure}

Kissner and Song proposed the first solution to the OT-MP-PSI problem~\cite{kissner2004private}. However, their solution requires a high number of communication rounds and it is computationally expensive. Specifically, their protocol requires $\mathcal{O}(N)$ communication rounds, has $\mathcal{O}(N^3M)$ communication complexity, and $\mathcal{O}(N^3M^3)$ computation complexity. This makes it impractical for real-life applications with large data sets because it requires each participant to have significant computational resources and to be online for the duration of the protocol. Additionally, the sequential nature of the protocol hinders the scalability as adding more compute resources does not directly result in faster execution times.

Mahdavi et al.~\cite{mahdavi2020practical} introduce a solution with a constant number of rounds and better communication complexity. However, their scheme still imposes impractically high computational costs for our use case.

In this work, we propose an OT-MP-PSI protocol that outperforms the previous state-of-the-art solution~\cite{mahdavi2020practical} by factors ranging from $33\times$ to $23,066\times$ in terms of computational overhead. This efficiency improvement enables the practical usage of the protocol in the collaborative network intrusion detection problem because the nature of the problem requires the processing of large IP address sets as each institution receives connections from hundreds of thousands of IP addresses every hour. In a setting with 33 participating institutions and at most 144,045 IPs connecting to any institution, the system can detect malicious IPs in 170 seconds, as we show in our experiments. The previous state-of-the-art solution~\cite{mahdavi2020practical} requires multiple days to complete the same computation.

We use Shamir's secret sharing~\cite{shamir} to reveal identical items that are over the threshold. Shamir's secret sharing allows a secret data item to be split into many shares such that any size-$t$ subset of the shares can be used to reconstruct the original secret. However, subsets smaller than $t$ do not reveal anything about the original secret. We make every participant create a single secret share for each element they own, and then combine these secret shares among participants to find the elements that exist in $t$ or more sets. The main challenge with this approach is that all $t$ combinations of secret shares must go through the reconstruction process to find the shares corresponding to the same set element because secret shares do not reveal anything about the underlying set element. A naive approach requires $\binom{N}{t} M^t $ different combinations. Mahdavi et al.~\cite{mahdavi2020practical} uses a binning technique that reduces the necessary combinations to $\mathcal{O}(M(N \log{M}/t)^{2t})$. As our main contribution, we introduce a new hashing scheme that reduces this number to $\mathcal{O}(tM\binom{N}{t})$ linear in $M$ and hence scalable to real problem sizes.

Moreover, we show that our protocol efficiently handles the case where the threshold is equal to the number of participants $(t=N)$ with a computational complexity of $O(N^{2}M)$. This problem is of independent interest \cite{bay2022practical, morales2023private}.

We prove the protocol's security under the semi-honest multiparty computation model. The semi-honest model prevents many practical attacks, such as eavesdropping by participants, but can be extended to the malicious model to handle active adversaries. However, even under the malicious model, private set intersection protocols remain vulnerable to input substitution attacks, which can reveal targeted information.

For network efficiency and ease of deployment, the protocol uses a dedicated Aggregator that combines the secret shares with some negligible leakage about the contents of the sets. We will explain precisely what this leakage is in Section \ref{sec:usecase}.

We offer two separate deployment options for our protocol: collusion-safe option and non-interactive option. Collusion safe option requires constant rounds of interactions and has more communication overhead but it resists collusions. The non-interactive option has a much simpler structure and less communication overhead but it assumes the aggregator is non-colluding.

We implement our protocol and evaluate its performance on real-world network connection logs from CANARIE IDS program~\cite{canarie} to show its practicality.

The remainder of this paper is structured as follows. We define the security model and describe some preliminary cryptographic concepts in Section~\ref{sec:preliminaries}. We explain the use case of the protocol in detail and provide the formal functionality of the protocol in Section~\ref{sec:usecase}. We present the main challenges, our approach, and the formal description of our protocol in Section~\ref{sec:protocol}. We give a formal analysis of the failure probability of our new hashing scheme in Section~\ref{sec:hashing}. We prove our protocol's security and show its efficiency with both theoretical analysis and experimental results in Section~\ref{sec:eval}. We review related work in Section~\ref{sec:related} before we conclude our findings and discuss future work in Section~\ref{sec:conclusion}.

\section{Preliminaries}
\label{sec:preliminaries}
\subsection{Security Model}
\label{sec:secmodel}
We follow Goldreich's definition where a semi-honest party adheres to the established protocol but records all its intermediate calculations and any messages it receives from other parties \cite{goldreich}. We say a protocol is secure under the semi-honest model if each party cannot gain any other information except the intended output of the protocol. More formally, let $f: (\{0,1\}^*)^n \longmapsto (\{0,1\}^*)^n$ be an n-ary functionality, where $f_i(x_1, ..., x_n)$, denotes the $i^{th}$ element of $f(x_1, ..., x_n)$. Let $\Pi$ be an n-party protocol for computing $f$. The view of the $i^{th}$ party during an execution of $\Pi$ on $\overline{x} = (x_1, ..., x_n)$ is defined as $VIEW_i^{\Pi} = (x_i, m_1, ..., m_t)$ where $m_i$ represents the $i^{th}$ message it has received. We say that $\Pi$ privately computes $f$ with respect to the semi-honest model if there exists a polynomial-time algorithm, denoted $SIM$, such that for every $i \in [n]$
\begin{equation}
\{SIM(x_i, f_i(\overline{x})\}_{\overline{x} \in\left(\{0,1\}^*\right)^n} \stackrel{\mathrm{c}}{\equiv} \{VIEW_i^{\Pi}(\overline{x})\}_{\overline{x} \in\left(\{0,1\}^*\right)^n}
\end{equation}
where $\stackrel{\mathrm{c}}{\equiv}$ denotes computationally indistinguishable.

\subsection{Shamir's Secret Sharing}
The objective of secret sharing is to divide a secret value $V$ into $n$ shares so that it can be reconstructed when a specific subset of these shares is combined. In a $(t,n)$-threshold scheme, any group of $t$ parties can collaborate to retrieve the secret value $V$ using their shares, while any group with fewer than $t$ members cannot gather any information about the secret, regardless of their efforts to collude.

In Shamir’s secret sharing \cite{shamir}, the distributing party generates $t - 1$ values $\{c_i\}_{i \in [t-1]}$ chosen at random from some finite field $F_q$ of prime order $q$ and forms the polynomial
\begin{equation}
\mathbb{P}(x) = c_{t-1}x^{t-1} + c_{t-2}x^{t-2} + ... + c_1x + V
\end{equation}

The distributing party generates $n$ secret shares by evaluating $\mathbb{P}$ at $n$ publicly-known distinct values. For instance, the secret share for party $P_i$ (with $i \in F_q$) is $V_i = (i, f (i))$. Since $t$ points uniquely determine a polynomial of degree $t - 1$, anyone possessing $t$ shares $V_i$ can recover the secret $V$ using Lagrange interpolation:
\begin{equation}
    \label{eq:lagrange}
    V = \sum_{i=1}^{t} \left( V_i \cdot \prod_{j=1, j \neq i}^{t} \frac{-j}{i - j} \right)
\end{equation}

In our protocol, we leverage Shamir's secret sharing to indicate the existence of a set element only if at least $t$ sets include that element.

\subsection{Oblivious Pseudo Random Function (OPRF)}
\label{sec:oprf}

An Oblivious Pseudo-Random Function (OPRF) is a protocol between a key holder that holds a secret key $K$ and a participant that holds an input $x$, where the participant learns the result of the Pseudo-Random Function $\mathbb{H}_K(x)$ without learning anything about the secret $K$, and the key holder does not learn anything about the input $x$ or the value of $\mathbb{H}_K(x)$. In our protocol, we use the following 2HashDH OPRF protocol introduced by Jarecki et al. \cite{jarecki2016}.
Let $H$ and $H'$ be two cryptographic hash functions.
\[
\begin{array}{c c c}
\text{\textsf{\underline{\textbf{Participant}}}} & & \text{\textsf{\underline{\textbf{Key Holder}}}} \\
r \leftarrow_{\text{\tiny{R}}} \mathbb{Z}_q & & \\
a \leftarrow H(x)^r & \xrightarrow{\hspace{2.7em}a\hspace{2.7em}} & b \leftarrow a^K \\
\text{Output: } H'(x,b^{1/r})& \xleftarrow{\hspace{2.7em}b\hspace{2.7em}} & \\
\end{array}
\]
This protocol obliviously evaluates $H'(x,H(x)^K)$. We can extend this protocol to multiple key holders with the following: the participant runs the protocol with the same input for all $k$ key holders and combines the output by multiplying them.
\begin{align}
    H'\left(x,H(x)^{K_1} \cdot H(x)^{K_2} \cdots H(x)^{K_k}\right) \nonumber \\ 
    = H'\left(x,H(x)^{K_1 + K_2 + \dots +K_k}\right) = \mathbb{H}_{K_1\dots K_k}(x) \nonumber
\end{align}

\subsection{Oblivious Pseudo Random Secret Sharing (OPR-SS)}
\label{sec:oprss}
Oblivious Pseudo-Random Secret Sharing (OPR-SS) \cite{mahdavi2020practical} is a protocol that combines share generation and reconstruction properties of secret sharing and the security properties of the OPRFs. It allows participants to get a secret share from key holders, unique for their input and their identity, without key holders learning anything about the input or the secret share, and without participants learning anything about the secret keys of key holders. Later, any $t$ secret shares from $t$ different participants with the same input can be used to reconstruct the original value. The functionality of the OPR-SS is given in Figure \ref{func:opr-ss}.

\begin{figure}[h]
\centering
\fbox{
    \parbox{0.95\linewidth}{
        The functionality is parameterized by a threshold $t$, the number of key holders $k$ and a value $V$.

        \textbf{Input of $\bm{P_i}$:} The participant $P_i$ provides the input $s$

        \textbf{Input of each $\bm{KH_j}$:} Each key holder $KH_j$ provides $t$ secrets $\{K_{j,1},\dots, K_{j,t}\}$. 

        \textbf{Functionality:} 
        The functionality computes the polynomial
        \vspace{-0.1cm}
        \[\mathbb{P}_{s}^{K_1, \dots K_k}(i) = V+\sum_{m=1}^{t-1} i^{m} \cdot H(s)^{K_{1,m} + K_{2,m}+ \dots +K_{k,m}}\]

        \textbf{Output to $\bm{P_i}$:} The secret share $\mathbb{P}_{s}^{K_1, \dots K_k}(i)$. 

        \textbf{Output to each $\bm{KH_j}$:} Nothing.
    }
}
\caption{ Oblivious Pseudo-Randon Secret Sharing (OPR-SS) Functionality $\mathcal{F}_{OPR\!-\!SS}$}
\label{func:opr-ss}
\end{figure}

In our use case $V$ is a public value that is set to $0$. Reconstructing $0$ from $t$ different secret shares will indicate that they correspond to the same input.

\section{Use Case}
\label{sec:usecase}

Detection of multi-institution cyberattacks is a critical issue faced by organizations today. Institutions that share a common context, such as universities within the same country or hospitals in the same city, often become simultaneous targets of malicious actors. Attackers typically leverage vulnerabilities across multiple organizations simultaneously, aiming to maximize their gains before these vulnerabilities are patched. Leveraging this intuition, Zabarah et al.~\cite{zabarah2023approach} introduced an effective indicator with a 95\% recall rate, which counts the number of distinct institutions contacted by an external IP address within a predefined time window. If this count exceeds an empirically determined threshold $t$, the IP address is classified as malicious. This approach is used in practice, due to its simplicity and general applicability, as it requires only monitoring external IP addresses without relying on specific attack signatures.

CANARIE \cite{canarie} is an institutional internet provider that powers the network of 700+ Canadian research and educational institutions. The CANARIE IDS Program~\cite{canarie} is a collaborative initiative between CANARIE, and 106 Canadian research and education institutions. The program collects and analyzes participating institutions’ network logs for intrusion detection purposes. It has deployed the indicator proposed by Zabarah et al.~\cite{zabarah2023approach} and uses it to identify potential threats. However, participating institutions currently send their logs to CANARIE in plaintext so that IP addresses connecting to multiple institutions can be identified. This centralized, plaintext log-sharing model is also common in commercial outsourced Security Operations Centers, so the situation is not unique to our setting. Although straightforward, this method poses significant privacy risks. Institutions are hesitant to share complete network logs because these logs contain personally identifiable information and sensitive operational data unrelated to security threats. The aggregator, learning the entire dataset, obtains far more information than necessary. This also creates a compliance problem with privacy laws around the world. Many jurisdictions, including the European Union’s GDPR~\cite{european_commission_regulation_2016}, Canada’s PIPEDA~\cite{act2000personal}, and California’s CCPA~\cite{ccpa2018}, classify IP addresses as personally identifiable information, limiting their sharing and processing without explicit consent. While these regulations allow exceptions for cybersecurity purposes, such exceptions are permitted only when strictly necessary.

Our protocol addresses these privacy and compliance challenges by enabling institutions to collaboratively identify IP addresses appearing in at least $t$ institutional datasets, indicating potential malicious activity, without disclosing any IP addresses below this threshold. Consequently, institutions can privately and securely detect multi-institution attacks without entrusting a single aggregator with unrestricted access to sensitive data.

We assume that all parties are semi-honest, which means participants adhere to the protocol but may try to learn additional information beyond their intended output. Our protocol offers two deployment options: collusion-safe and non-interactive.

\textbf{Collusion-safe deployment} includes participants, key holders, and the aggregator. Some participants may also serve as key holders. In this topology, participants connect directly to the aggregator in a star configuration, key holders connect to all participants, and at least one key holder connects to all other key holders. This deployment assumes at least one key holder does not collude with the aggregator and requires a constant number of interactions.
In practice, this deployment option is most useful when no neutral third party exists to serve as the Aggregator because it allows one of the participants to act as the Aggregator.

\textbf{Non-interactive deployment} eliminates the key holders and assumes a non-colluding aggregator. Participants may still collude among themselves. This deployment option is useful when there is a neutral, semi-trusted party that can act as the Aggregator, such as our use case with the CANARIE IDS Program~\cite{canarie}.

\begin{figure}[h]
\centering
\fbox{
    \parbox{0.95\linewidth}{

        The functionality is parameterized by a threshold $t$, the number of participants $N$, and number of key holders $k$. Let $P = \{P_1, \dots, P_N\}$ be the set of participants, and $A$ be the aggregator. If using the collusion-safe deployment, let $KH = \{KH_1, \dots, KH_k\}$ be the set of key holders.

        \textbf{Input of each $\bm{P_i}$:} Each participant $P_i$ provides a set $S_i \subseteq S$, where $S$ is the universe of possible elements (e.g., IPv4/IPv6 addresses).

        \textbf{Input of $\bm{A}$:} The aggregator $A$ provides no private input.

        \textbf{Input of each $\bm{KH_j}$:} The key holders provide no private input.

        \textbf{Functionality:} 
        The functionality computes the sets
        \vspace{-0.2cm}
        \[
        I = \{ s \in S \mid s \text{ appears in at least } t \text{ of the sets } \{S_1, \dots, S_N\} \}.
        \]
        \[
        B = \{(b_1, \dots, b_N) \mid 
            \left.\begin{cases} 
            b_i = 1 \text{ if } s \in S_i  \\
            b_i = 0\ \text{ otherwise}
            \end{cases}\right\} \forall{s \in I} \}
        \]

        \textbf{Output to each $\bm{P_i}$:} Each participant $P_i$ receives the set $I \cap S_i $. 

        \textbf{Output to $\bm{A}$:} The aggregator $A$ receives the set $B$.

        \textbf{Output to each $\bm{KH_j}$:} Nothing.
    }
}
\caption{OT-MP-PSI Functionality $\mathcal{F}_{OT\!-\!MP\!-\!PSI}$}
\label{func:otmpsi}
\end{figure}

As shown in Figure \ref{func:otmpsi}, only the over-threshold elements are revealed to the participants and the aggregator only learns the existence of the over-threshold elements among participants. With this functionality, institutions can safely collaborate to detect multi-institution attacks without compromising the privacy of their complete network logs. In our use case of collaborative intrusion detection, the participants identified to be involved in an attack would share the identified potentially malicious IP addresses with other participants and the aggregator through a threat sharing platform such as MISP\footnote{\url{https://github.com/MISP/MISP}}, identify the significant threats with severity estimation and take precautions using next-threat prediction as sugested by Zabarah et. al.~\cite{zabarah2023approach}.

\section{Protocol}
\label{sec:protocol}

\begin{table}[t]\caption{Notations}
    \centering
    \begin{tabular}{r c p{6cm} }
    \toprule
    $P_i$ & $\coloneqq$ & $i^{th}$ Participant\\
    $S$ & $\coloneqq$ & Domain of elements\\
    $S_i$ & $\coloneqq$ & Set of elements held by the participant $P_i$\\
    $s_{i,j}$ & $\coloneqq$ & $j^{th}$ element in the set $S_i$\\
    $A$ & $\coloneqq$ & Aggregator\\
    $H(\cdot)$ & $\coloneqq$ & Cryptographic hash function\\
    $H_K(\cdot)$ & $\coloneqq$ & Hash-based message authentication code (HMAC) with key $K$\\
    $h_K()$ &$\coloneqq$ & The keyed hash function used to map elements into bins \\
    $\mathbb{P}_{s}(x)$ & $\coloneqq$ & Polynomial used for secret share creation\\
    $K$ & $\coloneqq$ & Symmetric secret key used by participants\\
    \multicolumn{3}{c}{}\\
    \multicolumn{3}{c}{\underline{Protocol Parameters}}\\
    \multicolumn{3}{c}{}\\
    $N$ & $\coloneqq$ & Number of participants\\
    $t$ & $\coloneqq$ & Threshold\\
    $M$ & $\coloneqq$ & Maximum number of elements in each set\\
    $k$ & $\coloneqq$ & Number of key holders\\
    \bottomrule
    \end{tabular}
    \label{tab:notations}
\end{table}

We first present a high-level overview of our OT-MP-PSI protocol. The principal idea is that all participants create a secret share for each element in their set and send these shares to the Aggregator. The Aggregator can only reconstruct secrets that have $t$ or more shares. If a successful reconstruction happens, the Aggregator then informs the participants who sent the secret shares about the reconstruction to indicate that the underlying element is in the over-threshold intersection. This approach has a couple of challenges.

\subsection{Using the same polynomial}

We need to coordinate the use of Shamir's secret sharing, but the security of our protocol would be compromised if there was a central party that straightforwardly creates the shares, because the participants would have to disclose their set elements to this party to get a secret share. The OPR-SS protocol described in Section \ref{sec:oprss} is a good solution for this problem. This approach provides resistance to collusion but it is interactive. We also use a simpler approach that is non-interactive but less collusion resistant. In the non-interactive deployment, participants communicate a secret key $K$ among themselves, which is not disclosed to the Aggregator, and they produce the coefficients of the polynomial themselves using a keyed hash function, as described in Equation \ref{eq:poly2}. Here we secret share the value $0$, similar to the OPR-SS protocol. This allows the aggregator to identify the successful reconstructions. Because if the shares correspond to the same element, they reconstruct the number $0$, and if not, they reconstruct a random element of $F_q$.
\begin{equation}
  \label{eq:poly2}
  \mathbb{P}_{s}^{K}(i) = 0 + \sum_{j=1}^{t-1} H_{K}^{j}(s)i^{j}
\end{equation}
Where $s$ is the set element, $i$ is the identifier of the participant $P_i$, $H$ is the hash-based message authentication code (HMAC) function, $K$ is the secret key, and the superscripts of $H_K$ indicate iteration, i.e., ${H^{i}_K(s) = H_K(H^{i-1}_{K}(s))}$. In our use case, the protocol uses IPv4/IPv6 addresses directly as the domain of elements without any preprocessing or mapping. With this approach, the Aggregator does not learn more than the intended output of the protocol as long as it is non-colluding.

\subsection{Reconstruction Complexity}
If we naively send all secret shares to the Aggregator, the Aggregator would need to try $\binom{N}{t} M^t$, an exponential number in $t$, different combinations to find all matching shares in all cases. We cannot use polynomial-time decoding algorithms for error-correcting codes, such as the list decoding~\cite{reed1960polynomial} or Berlekamp-Welch' algorithm~\cite{welch1986error}, because they may fail, if the number of elements exceeds the threshold $t$, but does not reach the decoding threshold for the error-correcting code. Comparing exponentially many combinations is infeasible if the set sizes $M$ are large, even if the exponent $t$ is rather low. We need to give the Aggregator some hint on how to combine the shares, but if we just put an identifier (e.g. keyed hash of the element) alongside the secret share, then the protocol would leak the similarity distribution of the sets and the Aggregator could see which sets have how many overlapping elements, even when those elements are under the threshold. This is more information than the intended output of our protocol.

A common approach to this problem is to put the shares into bins using a hash table. In this way, the Aggregator has to only try $t$-sized combinations of shares that are from the same bins. However, in order to not leak any distribution information, participants have to pad all of the bins with dummy shares so that all of the bins are equal in size. This approach reduces the complexity to $O(M(N \log{M}/t)^{2t})$, as shown by Mahdavi et al.~\cite{mahdavi2020practical}. Empirically, the maximum bin size, corresponding to the $\log{M}$ term in the complexity, carries large constants. This prevents the system from scaling to our problem size.

As the main contribution of this paper, we introduce a new hashing algorithm. We use hash tables (hereafter, simply tables) with bins of size 1 to map the secret shares. Instead of accommodating hash collisions with bigger bin sizes, we select one of the secret shares that map into the same bin using a pseudo-random ordering and only put that secret share into the table. Consequently, a single table will be missing some of the secret shares. Each participant creates multiple such tables. The motivation is that if participants create enough tables, missing values in one table will appear in others, and the probability of failure can be bounded to an acceptably low value with minimal cost. The main advantage of this approach is that the Aggregator does not need to try combinations of shares. It only needs to try combinations of participants. After selecting $t$ participants, the Aggregator applies Lagrange interpolation to the shares corresponding to the same table and bins (e.g., the first participant's first table's first bin and the second participant's first table's first bin). An illustration of the new hashing scheme is shown in Figure~\ref{fig:hashingvisualization}. We show that if the size of each table is set to $t \times M$ and each participant produces 20 tables (with some additional optimizations), the probability of getting an incorrect result is smaller than $2^{-40}$. We explain this hashing scheme in detail and show the failure probability in Section~\ref{sec:hashing}.
\vspace{-0.2cm}
\begin{figure}[h]
  \centering
  \includegraphics[width=0.475\textwidth]{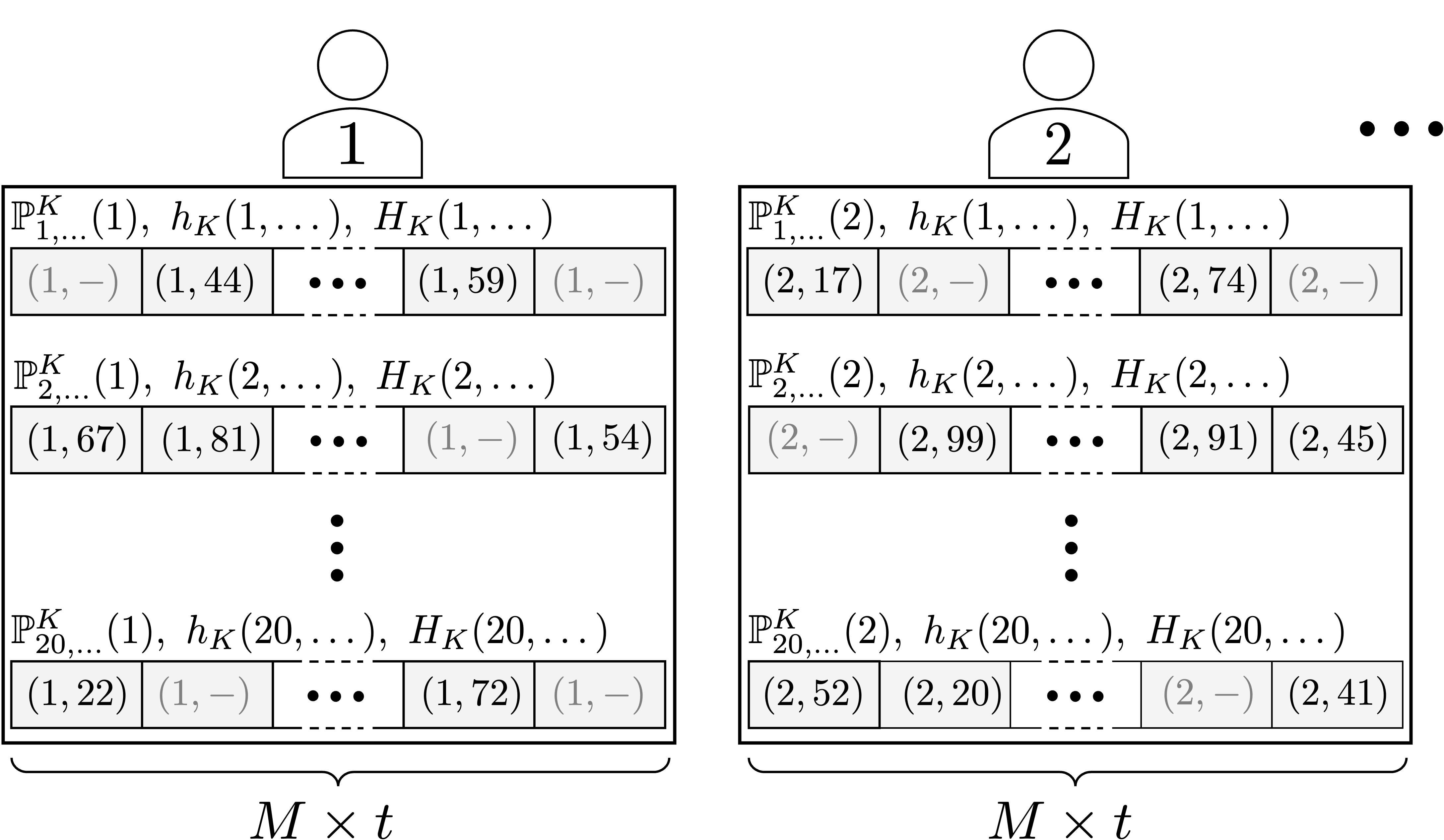}
  \vspace{-0.4cm}
  \caption{The new hashing scheme. Every participant creates multiple tables, $\mathbb{P}$ is the polynomial function, $h_K$ is the keyed hash function used for mapping, and $H_K$ is the keyed hash function used for pseudo-random ordering. The tuples in boxes represent the points on the polynomial, i.e. secret shares, and light gray numbers represent the dummy shares.}
  \label{fig:hashingvisualization}
\end{figure}
\subsection{Protocol Description}
\label{sec:protocoldesc}

Combining the solutions of these challenges, we describe our protocol as follows:

\subsubsection{Non-Interactive Deployment}
Let $K$ be the symmetric key held by all participants $\bm{P_i}\ \forall{i \in \{1, \dots, N\}}$. Let $h_K: \{0,1\}^* \longmapsto [Mt]$ be the keyed hash function used for mapping the elements to the tables. Let $H_K$ be the hash-based message authentication code (HMAC) function with key K, and let $r$ be the id of the current execution of the protocol. 
\begin{enumerate}
    \item Each participant $\bm{P_i}$ creates a table called $Shares$. It has $20$ sub-tables and each sub-table has $Mt$ bins. We denote the $x^{th}$ bin of the $y^{th}$ sub-table as $Shares[y][x]$. Participants fill the tables according to Equation~\ref{eq:share} for each element $s_{i,j}$ in their set and for each $\alpha \in \{1, \dots, 20\}$.
    \begin{alignat}{2}
        \label{eq:share}
         \raisebox{-.5\normalbaselineskip}[0pt][0pt]{\textbf{if}} \quad H_{K}(\alpha, s_{i,j}, r) \leq H_{K}&(\alpha, s, r)\ \  \nonumber \\ 
         \forall s \in \{ s \in S_i \colon h_K(\alpha&,s,r) = h_K(\alpha,s_{i,j},r) \} \ \ \ \\
         \textbf{then} \enspace \ Shares[\alpha][h_K(\alpha,s_{i,j}&,r)] = \mathbb{P}^{K}_{\alpha,s_{i,j},r}(i) \nonumber
    \end{alignat}
    \item All $\bm{P_i}$ fill up the empty bins in $Shares$ with random bits, then send the tables to the Aggregator $\bm{A}$.
    \item After receiving $N$ $Shares$ tables, $\bm{A}$ takes every $t$ combination of $\bm{P_i}$ and checks if the shares in the same bins produce a valid secret.
    \item For each $\bm{P_i}$, $\bm{A}$ sends the indexes of valid reconstructions in the $Shares$ table that $\bm{P_i}$ has sent.
    \item Each participant $\bm{P_i}$ matches the indexes with the underlying elements in their set $S_i$ and get their output $S_i \cap I$.
\end{enumerate}

\subsubsection{Collusion-safe Deployment}
For the collusion-safe deployment, participants do not hold a symmetric key. All the secret shares $\mathbb{P}^{K}_{\alpha,s_{i,j},r}(i)$ are computed with the OPR-SS protocol described in Section \ref{sec:oprss} and instead of the keyed hash functions $H_K$ and $h_K$, the multi key OPRF protocol described in Section \ref{sec:oprf} is used. Since $H_K$ and $h_K$ uses the same inputs for the same elements, a single OPRF call is used to produce both values. All the interactive rounds of the OPR-SS and OPRF protocols for each element is batched together to achieve constant number of interactions. The rest of the protocol is identical with the non-interactive deployment.

\subsection{Privacy of Set Sizes}
Our core protocol do not consider participant's set sizes as private information. So by default, participants communicate their set sizes in plaintext and find the max set size $M$ before running the protocol. If set sizes are private for a particular use case, then $M$ could be decided with a differentially private process. However, this process would need to add positive noise to $M$ since underestimating $M$ will break the core protocol, and this positive noise will adversely affect the performance of the core protocol because the runtime complexity is dependent on $M$.

\section{Analysis of Our Hashing Scheme}
\label{sec:hashing}

In this section, we analyze our new hashing scheme described in Section~\ref{sec:protocoldesc} designed to efficiently address the complexity of the matching problem. The challenge lies in the fact that the secret shares, by definition, do not reveal the underlying element, making it difficult to select a matching combination. In our hashing scheme, each participant places their secret shares into bins of size 1 using a hash function. Consequently, the Aggregator needs to process only one share per participant that fall into the same bin instead of selecting all subsets among a larger set.

In cases of hash collisions, where multiple elements map to the same bin, we must select a single element from those that collide. Rather than choosing this element randomly, we employ an additional hash function to establish an order among the shares and then select the smallest one according to this order. This method increases the likelihood that different participants will choose the same element for a given bin.

Participants repeat this process to create multiple tables using a distinct mapping hash function and ordering hash function for each table. Our goal is to limit the probability of missing an intersection to an acceptably low level by re-randomizing the mapping and ordering for each table. We set our target failure probability to $2^{-40}$ as $40$ is a common statistical security parameter~\cite{de2006cryptographic, boneh2001importance, lindell2016fast}.

We first show that if the size of the table is set to $M\times t$, the probability of missing an intersection is a constant number smaller than 1 for a single table. Let $h: \{0,1\}^* \longmapsto [Mt]$ denote the mapping hash function, $H$ denote the cryptographic ordering hash function, and $shares_i$ denote a single sub-table of the participant $P_i$. 

Given a set element $s_{i,j} \in S_i$ of the participant $P_i$, let $p$ denote the probability of $H(s_{i,k}) < H(s_{i,j})$ for a random $s_{i,k} \in S_i$. Since the output distribution of $H$ is uniform, $p$ can be calculated with the following equation:
\vspace{-0.2cm}
$$p \coloneqq \frac{H(s_{i,j})}{max(H)}$$
The probability that a random set element $s_{i,k}$ maps to the same bin as $s_{i,j}$ is given by
\vspace{-0.3cm}
$$P(h(s_{i,j}) = h(s_{i,k})\ |\ s_{i,j}) = \frac{1}{Mt}$$
The probability that a random set element $s_{i,k}$ blocks $s_{i,j}$, that is, $s_{i,k}$ maps to the same bin and is smaller than $s_{i,j}$ in the ordering:
\vspace{-0.2cm}
\begin{align}
P\Bigl((h(s_{i,j}) & = h(s_{i,k})\ \text{and}\ H(s_{i,k}) < H(s_{i,j})\mid s_{i,j} \Bigr) = \frac{p}{Mt} \nonumber
\end{align}
To calculate a lower bound for the probability that $s_{i,j}$ will be placed in the table, we raise the probability that $s_{i,j}$ is not blocked by a random element to the $M^{th}$ power because $|S_i| \leq M$.

\begin{align}
P&\Bigl(H(s_{i,j}) < H(s)\ \forall s \in \{s \in S_i\ :\ h(s_{i,j}) = h(s)\}\mid s_{i,j}\Bigr) \nonumber \\
&= P(s_{i,j} \in shares_i) = \left(1-\frac{p}{Mt}\right)^{|S_i|-1} \geq \left(1-\frac{p}{Mt}\right)^M \nonumber
\end{align}
With large $M$, the expression $\left(1-\frac{p}{Mt}\right)^M$ approximates $e^{-p/t}$ and for $M>50$ this approximation does not affect any of our results. We now calculate the probability that $t$ different participants will put $s_{i,j}$ in the table given that $s_{i,j}$ is in their sets. This is because, for a successful reconstruction, all participants that have the set element have to put the element into the same table. Since all participants use the same ordering hash function for the same table, the $p$ values will be the same, so we can raise the $e^{-p/t}$ to the $t^{th}$ power to calculate this probability.
\vspace{-0.2cm}
$$P(s_{i,j} \in shares_k\ \forall_{k \in [t]} \mid s_{i,j} \in S_k\  \forall_{k \in [t]}) \geq \left(e^{-p/t}\right)^t = e^{-p}$$
From this equation, it follows that the probability of failure for a particular set element is less than or equal to $1-e^{-p}$, given that this element exists in $t$ different sets.

To calculate the probability of failure for any set element, we treat $p$ as a continuous random variable with a uniform distribution between $0$ and $1$.
\begin{align}
P(\text{fail} \mid p) &\leq 1-e^{-p} \nonumber \\
P(\text{fail}) &\leq \int_{0}^{1}(1-e^{-p})dp = e^{-1} \approx 0.3678 \nonumber
\end{align}
This shows that the probability of failure, more explicitly, the probability of missing any given over-threshold intersection with a single table is at most $e^{-1}$. Every participant needs to create $28$ different tables so that the probability of failure is at most $(e^{-1})^{28} \approx 2^{-40.4}$

We implement two additional optimizations to the hashing scheme that brings the required number of tables to 20. 
First, instead of using a unique ordering hash function for each table, we use the same ordering hash function for every two consecutive tables and reverse the ordering for even-numbered tables.
Second, We do a second insertion after the first insertion for every table. We use a different mapping hash function $h'$ for the second insertion. The second insertion utilize the unused bins from the first insertion. The theoretical analyses of these optimizations are described in Appendix \ref{opt-apdx}.

\section{Evaluation}
\label{sec:eval}
We evaluate the security and performance of our OT-MP-PSI protocol. First, we provide a security analysis using the semi-honest multi-party computation model. We then provide a theoretical complexity analysis for computation and communication costs. Finally, we present experimental performance benchmarks for our protocol, including on the real-world data set from CANARIE IDS Program~\cite{canarie}.

\subsection{Security Analysis}
In this section, we analyze the security of our protocol under the semi-honest model and show that it is secure by constructing the simulators for the participants and the Aggregator.
\begin{theorem}
    \label{thm:security}
    The non-interactive protocol described in Section \ref{sec:protocoldesc} is secure in the semi-honest model given that the Aggregator A is non-colluding.
\end{theorem}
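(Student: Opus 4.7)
The plan is to follow the simulation paradigm from Section~\ref{sec:secmodel} and construct polynomial-time simulators for the two non-trivial cases permitted by the theorem's assumptions: any coalition of semi-honest participants (who collectively know $K$), and the Aggregator alone (who does not know $K$). Because the theorem requires $A$ to be non-colluding, I do not need to treat a mixed coalition of participants and the Aggregator.

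The participant-side simulator $SIM_P$ is essentially bookkeeping. A coalition of participants collectively possesses $K$, each of their sets $S_i$, and, from the prescribed output, each set $S_i \cap I$. The only externally received component of their joint view is the list of valid-reconstruction indices returned by $A$. Given $K$, the round identifier $r$, and $S_i \cap I$, each such index is deterministically $h_K(\alpha, s, r)$ for $s \in S_i \cap I$, so $SIM_P$ can reproduce the entire view exactly by rerunning the share-generation procedure and recomputing the index list. Here indistinguishability is in fact equality of distributions.

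The Aggregator simulator $SIM_A$ is the technical core. On input the output set $B$ of bit vectors, $SIM_A$ proceeds as follows: for each $(b_1,\dots,b_N) \in B$ and each sub-table $\alpha \in \{1,\dots,20\}$, draw a bin index uniformly from $[Mt]$; sample a fresh degree-$(t-1)$ polynomial with constant term $0$ and with the remaining $t-1$ coefficients uniform in $F_q$; and place the evaluation at $i$ into that bin of participant $P_i$'s $\alpha$-th sub-table for every $i$ with $b_i = 1$. All remaining cells across all $N$ participants' tables are filled with independent uniform bits, and the simulated messages from $A$ back to each $P_i$ are the corresponding bin indices.

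Indistinguishability is established by a hybrid argument. First, replace every HMAC invocation with a truly random function, which is indistinguishable by the PRF security of HMAC since $A$ does not hold $K$. In this idealised world, the bin-mapping $h_K$ is a random function, so the bin positions of over-threshold elements match the simulator's uniform choices; and each polynomial coefficient $H_K^j(s)$ is a uniform field element, so by Shamir's privacy any strictly fewer than $t$ shares of the same element are uniformly distributed and therefore statistically indistinguishable from the random bits that $SIM_A$ uses to fill the ``empty'' cells. The main obstacle will be controlling two forms of spurious structure. The first is a $t$-tuple of under-threshold shares that accidentally interpolates to $0$; this happens with probability $1/q$ per tuple, and a union bound of the form $20 \binom{N}{t} M t / q$ over all candidate tuples makes it negligible in the cryptographic field size. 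The second is the possibility that two distinct over-threshold elements truly collide into the same bin under the real $h_K$, together with the pseudo-random tie-breaking by $H_K$; I would argue that under the idealised random $h_K$ the joint law of those bin choices matches the simulator's independent uniform sampling, and that the tie-breaker on the losing element contributes nothing beyond a uniform value, so no additional view leaks. Chaining these hybrids yields the computational indistinguishability required by the definition in Section~\ref{sec:secmodel}.
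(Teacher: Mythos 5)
Your participant-side simulator matches the paper's almost verbatim, and your overall strategy for the Aggregator --- idealize the HMAC as a random function, invoke Shamir privacy to make under-threshold shares uniform, and union-bound the spurious-reconstruction and collision events --- is more explicit about the cryptographic reductions than the paper's rather terse argument. However, there is a concrete distributional gap in your Aggregator simulator. You place the share of each over-threshold element into \emph{all} $20$ sub-tables of \emph{every} participant $P_i$ with $b_i=1$. In the real execution this is not what happens: by Equation~\ref{eq:share}, an element is written into sub-table $\alpha$ only if it wins the pseudo-random tie-break against every other element of $S_i$ that collides with it under $h_K(\alpha,\cdot,r)$. The analysis in Section~\ref{sec:hashing} shows this happens with probability roughly $e^{-p/t}$ per participant and sub-table, jointly across the $t$ holders about $e^{-p}$ per sub-table, so in the real world a valid reconstruction appears in only a random, $p$-dependent subset of the $20$ sub-tables (on the order of $12$--$13$ of them in expectation before the optimizations), whereas your simulated transcript yields a valid reconstruction in all $20$. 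The Aggregator directly observes, for each over-threshold element and each combination of $t$ holders, which sub-tables reconstruct, so this is an efficiently testable statistic and the two distributions are distinguishable. Your final remark about tie-breaking only addresses collisions between two over-threshold elements and the uniformity of the losing share; it does not cover the dominant case of an over-threshold element being evicted by an under-threshold (secret) element, which is exactly what shapes the success pattern.

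The paper sidesteps this by having $SIM_A$ sample full dummy input sets of size $M$ that are consistent with $B$ and then run the honest share-generation procedure (hashing, ordering, eviction, padding) on them with a fresh key $K'$; since under the PRF idealization the eviction statistics depend only on the set sizes and not on the actual elements, this reproduces the correct distribution of success patterns. You can repair your proof either by adopting that device, or by keeping your direct construction but, for each over-threshold element, sampling $p$ uniformly and independently deciding for each (participant, sub-table) pair whether the share is actually placed, with the correlation structure induced by the shared ordering hash --- at which point you are essentially re-deriving the paper's simulator. The rest of your hybrid chain (PRF switch, Shamir privacy for the fewer-than-$t$ shares, the $20\binom{N}{t}Mt/q$ bound on accidental zeros) is sound and would carry over unchanged.
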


\begin{proof}
    We will prove security by constructing the simulator functions, defined in Section \ref{sec:secmodel}, for the participants and the Aggregator.
    The construction of the simulator for participants \\$SIM_{P_{i}}((S_{i},K,r), I\cap S_i)$ is as follows: The simulator needs to produce the message sent by the Aggregator, which is the indexes of the successful reconstructions as stated in the step 4 of the protocol in Section \ref{sec:protocoldesc}. To do this, simulator creates the $Shares$ table of $P_i$ by replicating the first step of the protocol. Then using the protocol output $I\cap S_i$, it identifies the indexes in the $Shares$ table that contains elements in the intersection. These indexes will be indistinguishable with the $VIEW$ of $P_i$ as they will be identical.
    
    We can construct the simulator $SIM_{A}(r, B)$ for the Aggregator as follows: The simulator creates $N$ different sets $\{S_{1}^{'},\dots,S_{N}^{'}\}$ that satisfy the protocol output $B$ of the Aggregator. For each $j \in [|B|]$, let  tuple $(b_1,\dots,b_N)$ be the $j^{th}$ tuple in $B$, the simulator puts the same random element $x_j$ to all $S_{i}^{'}$ where $b_i = 1$. Then it fills each set with independent uniform random elements until their size is $M$. The simulator then creates the secret shares and the $Shares$ tables for each simulated set, following the protocol description with hash functions parameterized on  a random key $K'$. The simulated $Shares$ tables have the same probability distribution of successful reconstructions as the real ones; secret shares do not reveal any information without a successful reconstruction and the indexes of the tables with successful reconstructions are following the same random distribution. Thus, the simulated $Shares$ tables are indistinguishable from the real ones.
\end{proof}

\begin{theorem}
    \label{thm:security-col}
    The collusion-safe protocol described in Section \ref{sec:protocoldesc} is secure in the semi-honest model given that the at least one key holder is not colluding with the Aggregator.
\end{theorem}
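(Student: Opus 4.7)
The plan is to prove Theorem~\ref{thm:security-col} by a hybrid/reduction argument that leverages Theorem~\ref{thm:security} together with the security of the underlying OPR-SS and OPRF sub-protocols. The key observation is that the only structural difference between the non-interactive and collusion-safe deployments is how the secret key material that parameterizes $H_K$, $h_K$, and $\mathbb{P}^{K}_{\alpha,s,r}(\cdot)$ is generated: in the non-interactive version participants share $K$, while in the collusion-safe version they effectively use $K^{\ast} = K_1 + K_2 + \dots + K_k$, where $K_j$ is the secret contribution of key holder $KH_j$, and they obtain their shares only through oblivious evaluations. I would therefore frame the proof as showing that, from each corrupted party's perspective, the collusion-safe transcript is computationally indistinguishable from the non-interactive transcript with a uniformly random $K$, after which Theorem~\ref{thm:security} applies.

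More concretely, I would build three simulators. First, for a participant $P_i$ with auxiliary input $(S_i, r)$ and intended output $I \cap S_i$, the simulator invokes the OPR-SS and OPRF simulators (which exist by the stated security of those primitives) to produce the incoming messages from the key holders, then completes the local $Shares$ table as in the non-interactive proof and produces the Aggregator's reply indices directly from $I \cap S_i$. Second, for the Aggregator with output $B$, the simulator is almost identical to $SIM_A$ in Theorem~\ref{thm:security}: it fabricates synthetic sets $S_1', \ldots, S_N'$ consistent with $B$, picks a fresh random $K'$, and produces the $N$ simulated $Shares$ tables. Third, for any corrupted subset of key holders, the simulator just runs the OPR-SS and OPRF simulators, which by assumption hide the participants' inputs.

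The hard step is arguing indistinguishability when the Aggregator colludes with a strict subset of the key holders (say all but $KH_{j^\ast}$). Here I would use two facts in sequence. First, because the honest $K_{j^\ast,m}$ is sampled uniformly and independently, the sum $K^{\ast}_m = \sum_j K_{j,m}$ is uniformly distributed and independent of everything the coalition sees; so from the coalition's view, the effective key is a fresh random key, just as in the non-interactive model. Second, by the OPRF/OPR-SS security, the oblivious evaluations leak nothing beyond this effective key-dependent output to the Aggregator (who is not a participant and thus never receives an evaluation). Once these two reductions are in place, the simulated transcript is indistinguishable from a non-interactive execution with key $K^{\ast}$, and the remainder of the argument is exactly that of Theorem~\ref{thm:security}.

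Finally, I would handle the remaining coalition patterns (participants colluding with each other and/or with key holders, but not with the Aggregator) by the same template: the participant-side simulator is driven by $I \cap S_i$ just as before, and any additional messages from key holders are produced using their simulators. The conclusion is that for every admissible coalition excluding the pair (Aggregator, all key holders), a polynomial-time simulator reproduces the joint view, establishing security in the semi-honest model under the stated non-collusion assumption.
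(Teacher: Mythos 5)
Your overall strategy matches the paper's: both proofs reduce the collusion-safe deployment to the non-interactive one by invoking the proven security of the OPR-SS and OPRF sub-protocols, observing that the additively shared key remains uniform from the coalition's view as long as one key holder is honest, and then composing the simulators from Theorem~\ref{thm:security} (the paper appeals to Goldreich's composition theorem where you sketch an explicit hybrid). Your write-up is actually more detailed than the paper's on the key-randomness step and on why the Aggregator itself never receives an oblivious evaluation.

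There is, however, a gap in your case analysis. The two coalitions you actually argue are (i) the Aggregator with a strict subset of the key holders and (ii) participants colluding with each other and/or with key holders \emph{but not with the Aggregator}; your conclusion then claims coverage of every admissible coalition, yet the coalition consisting of the Aggregator together with a subset of the \emph{participants} (plus possibly some key holders) is never argued. That is precisely the coalition the collusion-safe deployment exists to defend against: in the non-interactive deployment, a participant colluding with the Aggregator hands over $K$, after which the pair can evaluate $H_K$, $h_K$, and $\mathbb{P}^{K}_{s}$ offline on arbitrary candidate elements and test membership against the other participants' tables. The paper's proof, terse as it is, explicitly names this case (``a subset of participants colluding with the Aggregator'') and handles it by composing $SIM_{A}$ with the participant simulators, the essential point being that under OPR-SS/OPRF security the coalition cannot evaluate the keyed functions on any element it did not legitimately query through the oblivious protocol, so the honest participants' table entries remain pseudorandom to it. You already have the needed ingredient --- your observation that the effective key $K^{\ast}$ stays uniform given one honest key holder --- but you must state and argue this joint Aggregator-plus-participants case rather than excluding it and then asserting the general conclusion.
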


\begin{proof}
    The security proofs of the OPR-SS protocol and the OPRF protocol are shown by Mahdavi et al. \cite{mahdavi2020practical} and Jarecki et al. \cite{jarecki2016} respectively. Thus, we can assume that no participant or the Aggregator learns anything about the additively shared key given at least one key holder is non-colluding. Hence, the simulator's output distributions remain independent given an honest party with secret key information.

    Therefore, we can construct a simulator for a subset of participants colluding with the Aggregator using a simple combination of the simulators described in the previous proof using Goldreich's composition theorem for the semi-honest model~\cite{goldreich}.
\end{proof}

\subsection{Complexity Analysis}
In this section, we provide a theoretical complexity analysis for our OT-MP-PSI protocol. We analyze the computation and communication costs for participants and the Aggregator.

\subsubsection{Computational Complexity}
\label{sec:comp}
\begin{theorem}
    \label{thm:comp}
    The computational complexity of the Aggregator for the protocols described in Section \ref{sec:protocoldesc} is $\mathcal{O}(t^2M \binom{N}{t})$.
\end{theorem}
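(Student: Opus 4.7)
The plan is to bound the Aggregator's work by counting the number of candidate Lagrange reconstructions it performs and then amortizing each reconstruction with a small per-subset precomputation. From the protocol in Section~\ref{sec:protocoldesc}, the Aggregator receives from each participant a $Shares$ table consisting of $20$ sub-tables, each with $Mt$ bins. Step~3 iterates over every $t$-subset $T$ of the $N$ participants and, for every sub-table index $\alpha \in \{1,\dots,20\}$ and every bin $x \in [Mt]$, attempts to reconstruct a secret from the $t$ shares drawn from bin $x$ of sub-table $\alpha$ across the participants in $T$, and tests whether the reconstructed value equals $0$. I would organize the proof around these two nested loops.

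First I would count the total number of reconstructions. The outer loop over $t$-subsets contributes $\binom{N}{t}$ choices and the inner loop over bin positions contributes $20 \cdot Mt$, so the Aggregator performs $20 Mt \binom{N}{t}$ reconstructions in the worst case. Step~4, which forwards the indexes of successful reconstructions back to each participant, is at most linear in this count and therefore cannot dominate.

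Next I would bound the per-reconstruction cost using Equation~\ref{eq:lagrange}. Each reconstruction is a weighted sum $\sum_{i \in T} V_i \, \lambda_i^{(T)}$ where the Lagrange coefficients $\lambda_i^{(T)} = \prod_{j \in T,\, j \ne i} \tfrac{-j}{i-j}$ depend only on the participant indices in $T$, not on $\alpha$ and not on the bin $x$. Hence for each fixed $T$ I would precompute the $t$ coefficients $\{\lambda_i^{(T)}\}_{i \in T}$ once in $O(t^2)$ field operations and reuse them across all $20 Mt$ reconstructions for that subset. With the coefficients in hand, each subsequent reconstruction is an $O(t)$ inner product, and the comparison against $0$ is $O(1)$.

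Putting the pieces together gives a per-subset cost of $O(t^2) + O(Mt \cdot t) = O(Mt^2)$, and summing over the $\binom{N}{t}$ subsets yields the claimed bound $O(t^2 M \binom{N}{t})$. The main step that requires care is this amortization: without reusing the precomputed Lagrange coefficients, a per-bin evaluation from scratch would cost $O(t^2)$ each and inflate the bound by an extra factor of $t$. Everything else is routine counting, enabled by the new hashing scheme of Section~\ref{sec:hashing}, which crucially ensures the Aggregator combines only one share per participant per bin rather than enumerating all $t$-sized multisets of shares within a bin.
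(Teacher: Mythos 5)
Your proposal is correct and follows essentially the same decomposition as the paper's proof: $\binom{N}{t}$ participant subsets, $\mathcal{O}(tM)$ bin positions per subset, and $\mathcal{O}(t)$ field operations per Lagrange reconstruction, multiplying out to $\mathcal{O}(t^2M\binom{N}{t})$. The only addition is your explicit amortization of the Lagrange coefficients $\lambda_i^{(T)}$ across all bins of a fixed subset $T$, a detail the paper leaves implicit when it simply asserts that a single interpolation costs $\mathcal{O}(t)$.
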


\begin{proof}
    The Aggregator has to try $\binom{N}{t}$ different combinations of participants. For each combination, the Aggregator has to do $\mathcal{O}(tM)$ Lagrange interpolations. For threshold $t$, a single Lagrange interpolation has a complexity of $\mathcal{O}(t)$. Thus, the total complexity is $\mathcal{O}(t^2M \binom{N}{t})$.
\end{proof}

As a corollary of Theorem \ref{thm:comp}, we can say that the computational complexity for the set intersection problem with two participants $(N=t=2)$ is $\mathcal{O}(M)$, and for the case of a threshold equal to the number of participants $(N=t)$ is $\mathcal{O}(N^2M)$.

\begin{theorem}
    The computational complexity of the participants for the non-interactive protocol described in section \ref{sec:protocoldesc} is $\mathcal{O}(tM)$.
\end{theorem}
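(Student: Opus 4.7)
The plan is to walk through the non-interactive protocol of Section \ref{sec:protocoldesc} step by step and upper-bound the work each participant performs, observing that the constant number of sub-tables (20) can be absorbed into the big-$\mathcal{O}$ notation.

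The dominant cost is Step 1, so I would analyze it carefully first. Fix a sub-table index $\alpha$. For every element $s_{i,j} \in S_i$ the participant must evaluate $h_K(\alpha,s_{i,j},r)$ and $H_K(\alpha,s_{i,j},r)$, which is $\mathcal{O}(1)$ per element and $\mathcal{O}(M)$ in total. To check the selection condition in Equation~\ref{eq:share} without incurring a quadratic blow-up, I would argue that the participant can group the elements by their bin index $h_K(\alpha,s_{i,j},r)$ using a single linear pass and then, within each bin, identify the element with the smallest ordering hash $H_K(\alpha,s_{i,j},r)$. Since each element belongs to exactly one bin, the grouping and minimum-finding together cost $\mathcal{O}(M)$ per sub-table.

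Next I would bound the cost of producing the actual shares. For each of the at most $M$ winning elements, computing $\mathbb{P}^{K}_{\alpha,s_{i,j},r}(i)$ requires the $t-1$ iterated HMAC values $H_K^{j}(\alpha,s_{i,j},r)$ and a single polynomial evaluation, which can be done with Horner's rule in $\mathcal{O}(t)$ additional work. This contributes $\mathcal{O}(Mt)$ per sub-table. Step 2's padding of empty bins with random bits is also $\mathcal{O}(Mt)$ per sub-table since each sub-table has exactly $Mt$ bins. Step 5 requires matching at most $|S_i \cap I| \leq M$ returned indexes back to set elements, which with a precomputed index-to-element map is $\mathcal{O}(M)$ and therefore does not affect the bound.

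Summing these contributions gives $\mathcal{O}(Mt)$ per sub-table, and multiplying by the constant 20 sub-tables yields the claimed $\mathcal{O}(tM)$ bound. I do not expect a real obstacle here; the only subtlety is ensuring the per-bin minimum selection is implemented in linear rather than quadratic time, which is handled by the bin-grouping step described above.
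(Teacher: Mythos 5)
Your proposal is correct and follows essentially the same approach as the paper: count the $\mathcal{O}(M)$ shares per sub-table across a constant number of sub-tables, each share costing $\mathcal{O}(t)$, with all other steps absorbed into the same bound. The paper's own proof is terser (it simply counts $20\cdot 2\cdot M$ shares, the factor of $2$ coming from the second-insertion optimization, at $\mathcal{O}(t)$ each), but your additional care about performing the per-bin minimum selection in linear time and accounting for the padding cost only fills in implementation details the paper leaves implicit.
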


\begin{proof}
     The participants have to create $2M$ secret shares for each subtable because of the second insertion optimization described in Appendix \ref{sec:doubleinsert} and $20\cdot 2\cdot M$ secret shares in total. The cost of creating a secret share is $\mathcal{O}(t)$. Thus, the total complexity is $\mathcal{O}(tM)$.
\end{proof}

\subsubsection{Communication Complexity}
\begin{theorem}
    The communication complexity of the non-interactive protocol described in Section \ref{sec:protocoldesc} is $\mathcal{O}(tMN)$.
\end{theorem}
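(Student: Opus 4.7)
The plan is to account separately for the two directions of communication in the non-interactive protocol and then sum the contributions over all $N$ participants. Both directions are essentially counting exercises once the structure of the $Shares$ table and the Aggregator's response are pinned down, so I expect no deep technical obstacle; the main care is in bookkeeping the padding and the constant factor $20$ from the hashing scheme.

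First I would handle the uplink (participants to Aggregator). In step 2 of the protocol each participant $P_i$ sends its full $Shares$ table. By construction this table consists of $20$ sub-tables, each with exactly $Mt$ bins, and each bin contains a single element of the field $F_q$ (either a genuine secret share or a uniformly random pad inserted in step 2). Since $20$ is a constant fixed once and for all by the failure-probability analysis in Section \ref{sec:hashing}, this is $\mathcal{O}(Mt)$ field elements per participant, hence $\mathcal{O}(NMt)$ field elements in total. Treating $\log q$ as a constant of the security parameter, the uplink cost is $\mathcal{O}(NMt)$ bits.

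Next I would bound the downlink (Aggregator to participants) in step 4. The Aggregator returns to each $P_i$ only the indexes of the bins in $P_i$'s table that were part of a successful reconstruction. Each such index is a pair $(\alpha, x)$ with $\alpha \in [20]$ and $x \in [Mt]$, so of size $\mathcal{O}(\log(Mt))$ bits. The number of indexes sent to $P_i$ is at most $20 \cdot |I \cap S_i| \le 20 M$, because each element of $I \cap S_i$ contributes at most one bin per sub-table. Summed over all participants this gives $\mathcal{O}(NM \log(Mt))$ bits, which is dominated by the uplink.

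Adding the two contributions yields the claimed $\mathcal{O}(NMt)$ bound. The step I expect to require the most care is the uplink count, because it is crucial to measure the \emph{padded} rather than the occupied bin count: the random fill in step 2 is load-bearing for the security argument in Theorem \ref{thm:security}, and any attempt to tighten the bound by transmitting only non-empty bins would leak distributional information about $S_i$. Noting this explicitly closes the proof.
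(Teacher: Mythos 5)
Your proposal is correct and follows essentially the same route as the paper: the dominant cost is the $N$ padded $Shares$ tables of size $20 \cdot Mt = \mathcal{O}(tM)$ field elements each, giving $\mathcal{O}(tMN)$ overall. Your additional bookkeeping of the downlink indexes and the remark that one must count padded rather than occupied bins are sound refinements the paper leaves implicit, but they do not change the argument.
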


\begin{proof}
    All the participants have to send their $Shares$ tables to the Aggregator. The size of each $Shares$ table is $\mathcal{O}(tM)$. Thus, with $N$ participants, the total communication complexity is $\mathcal{O}(tMN)$.
\end{proof}

\begin{theorem}
    The communication complexity of the collusion-safe protocol described in Section \ref{sec:protocoldesc} is $\mathcal{O}(tkMN)$ and it runs in 5 communication rounds.
\end{theorem}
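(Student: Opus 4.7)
The plan is to decompose the total communication into three contributions—(a) the OPR-SS/OPRF interactions between the participants and the $k$ key holders, (b) the transmission of each participant's $Shares$ table to the Aggregator, and (c) the Aggregator's returned list of successful reconstruction indices—and sum them.

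For (a), each participant owns at most $M$ elements and, for every element together with every one of the constant number of subtables in the hashing scheme of Section \ref{sec:protocoldesc}, runs a single OPR-SS call to obtain its secret share and an OPRF call to derive $h_K$ and $H_K$. As emphasised in the collusion-safe deployment description in Section \ref{sec:protocoldesc}, the OPRF queries for $h_K$ and $H_K$ share the same blinded input $H(s_{i,j})^r$ and can be combined with the OPR-SS exchange, so per element and per subtable the dominant cost is one OPR-SS against $k$ key holders. Reading off the OPR-SS functionality in Figure \ref{func:opr-ss}, each such call exchanges $\mathcal{O}(t)$ group elements with each of the $k$ key holders, i.e.\ $\mathcal{O}(tk)$ communication per element. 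Multiplying by $M$ elements per participant and summing over $N$ participants gives $\mathcal{O}(tkMN)$.

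For (b) and (c), each $Shares$ table consists of a constant number of subtables of $Mt$ constant-size bins, so each participant sends $\mathcal{O}(tM)$ to the Aggregator and receives at most $\mathcal{O}(tM)$ of reconstruction indices back. These together contribute $\mathcal{O}(tMN)$, which is absorbed by (a) since $k\ge 1$, yielding the overall $\mathcal{O}(tkMN)$ bound.

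For the round count, I would batch all OPR-SS/OPRF invocations across all elements, all subtables and all $k$ key holders into one participant-to-key-holder query round followed by one key-holder-to-participant response round, giving $2$ rounds. One round to ship the $Shares$ tables to the Aggregator and one round for the Aggregator's index response contribute $2$ more. The fifth round accommodates the initial coordination among the key holders that the deployment topology in Section \ref{sec:usecase} explicitly allows (the designated key holder connects to all others to set up the additive key shares that the OPR-SS functionality assumes). The main subtlety I anticipate is the batching argument: I will need to check carefully that neither OPR-SS nor the 2HashDH OPRF introduces any sequential dependency between queries for different elements or different subtables, so that all of them genuinely collapse into a single query/response pair. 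Once that is justified, both parts of the theorem follow directly.
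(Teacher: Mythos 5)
Your accounting of the communication \emph{volume} matches the paper's: a constant number of subtables times $M$ elements gives $\mathcal{O}(M)$ OPR-SS invocations per participant, each costing $\mathcal{O}(tk)$ across the $k$ key holders, for $\mathcal{O}(tkMN)$ in total, with the $\mathcal{O}(tMN)$ cost of shipping the $Shares$ tables absorbed into that bound. That half of the argument is essentially identical to the paper's.

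The round count, however, has a genuine gap. You assume that a batched OPR-SS execution collapses into a single query/response pair (2 rounds) and that the OPRF can ride along on the same exchange. The paper instead takes as given, citing Mahdavi et al., that the OPR-SS protocol runs in \emph{three} communication rounds, and it explicitly sequences the batched OPRF invocations \emph{after} the participants have received their secret shares, adding a fourth round; the single participant--Aggregator exchange is the fifth. Your claim that ``neither OPR-SS nor the 2HashDH OPRF introduces any sequential dependency'' is exactly the step that fails: OPR-SS is not a two-message protocol, and nothing in the paper supports compressing it to one round trip. You then reach the number $5$ only by adding a key-holder coordination round for setting up the additive key shares --- a round the paper does not count anywhere (the key holders' secrets are inputs to the functionality, not something established during the protocol). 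So your total of $5$ is a coincidence of two compensating miscounts rather than a derivation of the paper's $3+1+1$ decomposition. To repair the proof you should import the 3-round figure for OPR-SS from the cited work, place the batched OPRF round after it, and add the one Aggregator round.
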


\begin{proof}
OPR-SS protocol runs in three communication rounds and it has $\mathcal{O}(tk)$ communication complexity where $k$ is the number of key holders as shown by Mahdavi et al. \cite{mahdavi2020practical}. OPRF protocol runs in one communication round and it has constant communication cost. There will be $20\cdot 2\cdot MN$ OPR-SS and OPRF invocations in the entire protocol. Therefore, the total communication cost of the protocol will be $\mathcal{O}(tkMN)$ All of the OPR-SS invocations can be batched together and after getting the secret shares, all of the OPRF invocations can also be batched. Thus, adding the one additional communication round between the participants and the aggregator, the protocol will run in 5 communication rounds in total.
\end{proof}

\subsection{Correctness Evaluation}

\begin{figure}[h!]
    \centering
    \begin{tikzpicture}
        \begin{semilogyaxis}[
            xlabel={Number of Tables},
            ylabel style={align=center}, ylabel={Number of Missed \\ Intersection Elements},
            legend pos=north east,
            grid=major,
            width=0.42\textwidth,
            height=0.28\textwidth,
        ]
        \addplot[ybar, fill=YellowGreen, YellowGreen, ybar legend] table [x index=0, y index=1, col sep=comma] {simulate-share-array-4.csv}; \addlegendentry{Experimental Results}

        \addplot[ForestGreen, mark=square*] table [x index=0, y index=2, col sep=comma] {simulate-share-array-4.csv};
        
        \legend{Experimental Results, Computed Upper Bound}
        \end{semilogyaxis}
    \end{tikzpicture}
    \vspace{-0.2cm}
    \caption{Number of missed Intersection Elements in $10^7$ trials ($M=200, t=4$)}
    \label{fig:fail-eval}
\end{figure}
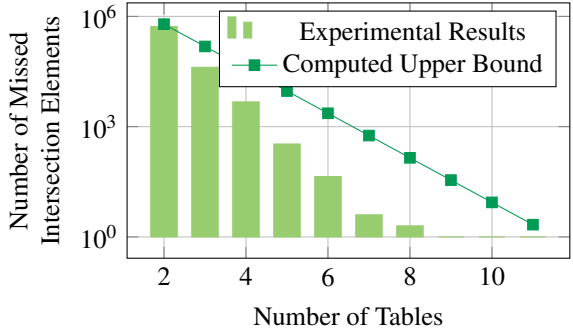

We experimentally evaluate the failure rate of our hashing scheme to validate our theoretical failure probability analysis. Figure \ref{fig:fail-eval} shows the number of missed over-threshold intersection elements out of 10 million trials for different number of tables. For odd number of tables the last table does not have a pair, so the failure probability upper bound is calculated as $(\text{2 table fail probability})^{(i-1)/2} \times (\text{1 table fail probability})$ where $i$ is the number of tables. It can be observed that the experimental results are well below the computed upper bounds. 

It is not computationally feasible to experimentally show the $2^{-40}$ lower bound that we claim in Section \ref{sec:hashing}. However, these results are still helpful to show the soundness of our calculations.

\subsection{Performance Evaluation}
In this section, we provide performance benchmarks for our OT-MP-PSI protocol. The protocol has two distinct phases that occur in sequential order: share creation in the participants and reconstruction in the Aggregator. Thus, we measure their runtime separately using synthetic data. We compare our results with the implementation of Mahdavi et al. \cite{mahdavi2020practical} as the only other solution with a public code repository at the time of writing. We also evaluate our system using real-world network data obtained from CANARIE IDS Program~\cite{canarie}.

\subsubsection{Setup}
We implement our protocol using Julia language with 430 lines of code. The cryptographic libraries that we use are SHA.jl and Nettle.jl. We use Julia's threads library for parallelization. We used the 61-bit Mersenne prime for the finite field. This is so that we can take advantage of fast modulo arithmetic with Mersenne primes and use 128-bit integers instead of arbitrary precision integers. Our implementation and the benchmark scripts used to generate the graphs in Section \ref{sec:eval} are available at \url{https://github.com/onurerenarpaci/Rand-Hashing-OT-MP-PSI}. The data collected from the CANARIE IDS Program is not disclosed due to the privacy agreements between institutions.

We use a server with 8 x Intel Xeon E7-8870 processors for all of our experiments. In total, we have 80 physical cores, each running at 2.4 GHz. The server has 1 TB of memory. However, none of our experiments require more than 14 GB of memory. Most of the memory is consumed by parallel Lagrange interpolations, and the degree of parallelism depends on the number of available CPU cores.

\subsubsection{Reconstruction}
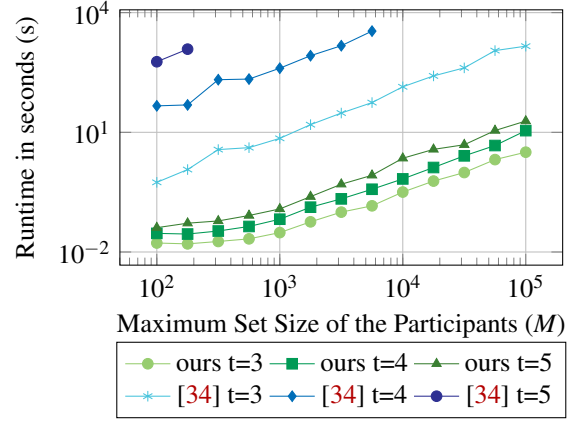
\begin{figure}[h]
    \centering
    \begin{tikzpicture}
        \begin{loglogaxis}[
            xlabel={Maximum Set Size of the Participants ($M$)},
            ylabel={Runtime in seconds (s)},
            legend style={at={(0.5,-0.3)}, anchor=north,legend columns=3},
            grid=major,
            width=0.42\textwidth,
            height=0.28\textwidth,
        ]
        \addplot[YellowGreen, mark=*] table [x index=0, y index=1, col sep=comma] {recon-comp.csv};
        \addplot[ForestGreen, mark=square*] table [x index=0, y index=2, col sep=comma] {recon-comp.csv};
        \addplot[OliveGreen, mark=triangle*] table [x index=0, y index=3, col sep=comma] {recon-comp.csv};
        \addplot[SkyBlue, mark=asterisk] table [x index=0, y index=4, col sep=comma] {recon-comp.csv};
        \addplot[NavyBlue, mark=diamond*] table [x index=0, y index=5, col sep=comma] {recon-comp.csv};
        \addplot[Blue, mark=*] table [x index=0, y index=6, col sep=comma] {recon-comp.csv};

        \legend{ours t=3, ours t=4, ours t=5, \cite{mahdavi2020practical} t=3, \cite{mahdavi2020practical} t=4, \cite{mahdavi2020practical} t=5}
        \end{loglogaxis}
    \end{tikzpicture}
    \vspace{-0.2cm}
    \caption{Reconstruction time comparison with Mahdavi et al. \cite{mahdavi2020practical} $(N=10)$}
    \label{fig:mahd_comp}
\end{figure}

We compare the reconstruction time of our protocol with the reconstruction time of Mahdavi et al. \cite{mahdavi2020practical} in Figure \ref{fig:mahd_comp}. We terminated some of the experiments using Mahdavi et al.'s work~\cite{mahdavi2020practical} because they ran for more than an hour. Our protocol is at least two orders of magnitude faster than Mahdavi et al. \cite{mahdavi2020practical} for all parameters, and the difference increases exponentially with bigger threshold values.

\begin{figure}[h]
    \centering
    \begin{tikzpicture}
        \begin{axis}[
            date coordinates in=x,
            xlabel={Date},
            ylabel={Runtime in seconds (s)},
            legend pos=south east,
            grid=major,
            width=0.42\textwidth,
            height=0.28\textwidth,
            xticklabel=\day.\month,
        ]
        \addplot[YellowGreen, mark=*, mark size=1pt] table [x index=5, y index=3, col sep=comma] {canarie-tick6-converted.csv};
        \end{axis}
    \end{tikzpicture}
    \vspace{-0.2cm}
    \caption{Reconstruction time on CANARIE IDS data~\cite{canarie}}
    \label{fig:canarie}
\end{figure}
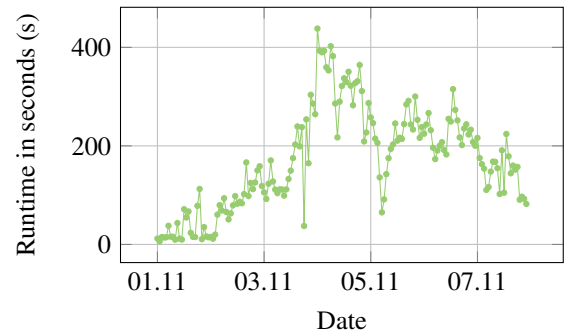

To demonstrate the practicality of our protocol in a real-world setting, we collected network connection logs from 54 institutions via the CANARIE IDS Program for a one-week period (November 1–8, 2023). During this time, the CANARIE IDS Program ingested approximately 8 billion logs per day, totaling about 700 GB (gzip-compressed) daily. We filtered the logs to only include records where the source was an external IP address and the destination was an internal IP address. Following Zabarah et al.~\cite{zabarah2023approach}, we ran the OT-MP-PSI protocol on hourly batches. For each institution and each hour, we extracted the unique external IP addresses connecting to that institution during that hour to form the participant sets for the protocol. If an institution has no external IPs that start connections in that hour, the institution is not included in the protocol. We set the threshold value to 3 which is the suggested value by Zabarah et al.~\cite{zabarah2023approach}. The reconstruction time during this week is shown in Figure~\ref{fig:canarie}. The maximum reconstruction time during the week is 438 seconds, where 40 institutions participate in the protocol, and the maximum set size is 220,011. The mean and median reconstruction times are 170 and 168 seconds, the mean and median participating institution counts are 33 and 32, and the mean and median maximum set sizes are 144,045 and 162,113 respectively. Since the protocol runs once every hour, we consider the mean/median reconstruction time acceptable for a real-world application.

\begin{figure}[h]
    \centering
    \begin{tikzpicture}
        \begin{semilogyaxis}[
            xlabel={Number of Participants ($N$)},
            ylabel={Runtime in seconds (s)},
            legend pos=south east,
            legend style={fill=white, fill opacity=0.6, draw opacity=1,text opacity=1},
            grid=major,
            width=0.42\textwidth,
            height=0.28\textwidth,
        ]
        \addplot[YellowGreen, mark=*] table [x index=0, y index=1, col sep=comma] {recon-N.csv};
        \addplot[ForestGreen, mark=square*] table [x index=0, y index=2, col sep=comma] {recon-N.csv};
        \addplot[OliveGreen, mark=triangle*] table [x index=0, y index=3, col sep=comma] {recon-N.csv};

        \legend{t=3, t=4, t=5, }
        \end{semilogyaxis}
    \end{tikzpicture}
    \vspace{-0.2cm}
    \caption{Reconstruction time versus number of participants $(M=10000)$}
    \label{fig:num_participants}
\end{figure}
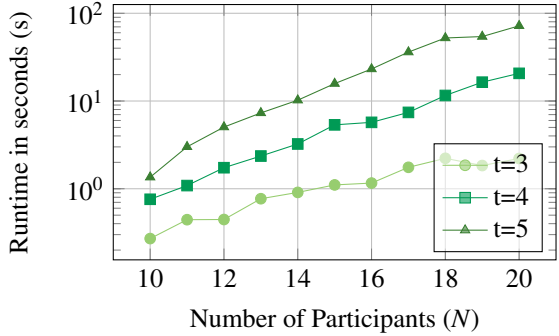

We measure the reconstruction time of our protocol for different numbers of participants in Figure \ref{fig:num_participants}. The reconstruction time increases polynomially with the number of participants. This is because of the $\binom{N}{t}$ term in our protocol's complexity has an upper bound of $(\frac{eN}{t})^t > \binom{N}{t}$.

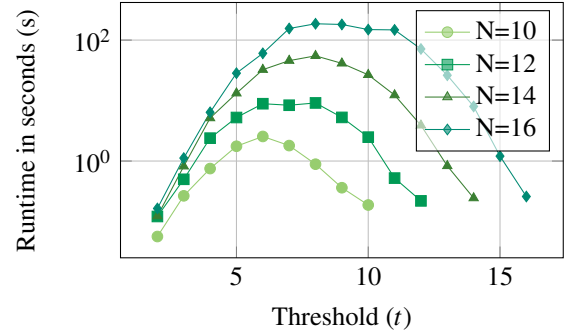
\begin{figure}[h]
    \centering
    \begin{tikzpicture}
        \begin{semilogyaxis}[
            xlabel={Threshold ($t$)},
            ylabel={Runtime in seconds (s)},
            legend pos=north east,
            legend style={fill=white, fill opacity=0.6, draw opacity=1,text opacity=1},
            grid=major,
            width=0.42\textwidth,
            height=0.28\textwidth,
        ]
        \addplot[YellowGreen, mark=*] table [x index=0, y index=1, col sep=comma] {recon-t.csv};
        \addplot[ForestGreen, mark=square*] table [x index=0, y index=2, col sep=comma] {recon-t.csv};
        \addplot[OliveGreen, mark=triangle*] table [x index=0, y index=3, col sep=comma] {recon-t.csv};
        \addplot[PineGreen, mark=diamond*] table [x index=0, y index=4, col sep=comma] {recon-t.csv};

        \legend{N=10, N=12, N=14, N=16}
        \end{semilogyaxis}
    \end{tikzpicture}
    \vspace{-0.2cm}
    \caption{Reconstruction time vs threshold $(M=10000)$}
    \label{fig:threshold}
\end{figure}

We measure the reconstruction time of our protocol for different threshold values in Figure \ref{fig:threshold}. The reconstruction time increases exponentially with the threshold value until $t=N/2$, after which it decreases exponentially. This is because of the $\binom{N}{t}$ term in the complexity of our protocol.

\subsubsection{Share Generation}

\begin{figure}[h]
    \centering
    \begin{tikzpicture}
        \begin{loglogaxis}[
            xlabel={Maximum Set Size of the Participants ($M$)},
            ylabel={Runtime in seconds (s)},
            legend pos=north west,
            legend style={fill=white, fill opacity=0.6, draw opacity=1,text opacity=1},
            grid=major,
            width=0.42\textwidth,
            height=0.28\textwidth,
        ]
        \addplot[SkyBlue, mark=*] table [x index=0, y index=1, col sep=comma] {sg-comp.csv};
        \addplot[NavyBlue, mark=square*] table [x index=0, y index=2, col sep=comma] {sg-vs-rc-2.csv};
        \addplot[YellowGreen, mark=triangle*] table [x index=0, y index=3, col sep=comma] {sg-comp.csv};
        \addplot[OliveGreen, mark=diamond*] table [x index=0, y index=4, col sep=comma] {sg-comp.csv};

        \legend{col-safe t=3, col-safe t=6, non-int t=3, non-int t=6}
        \end{loglogaxis}
    \end{tikzpicture}
    \vspace{-0.2cm}
    \caption{Share generation time of a single participant with the collusion-safe and non-interactive deployment}
    \label{fig:share-gen}
\end{figure}
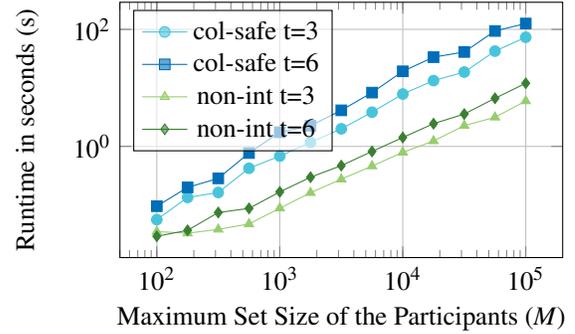

We measure the share generation time of a single participant for different maximum set sizes and different threshold $t$ values with the collusion-safe deployment and the non-interactive deployment in Figure \ref{fig:share-gen}. The share generation time increases linearly with the maximum set size, which confirms the $\mathcal{O}(tM)$ complexity of the participants. We observe that the collusion-safe deployment is approximately an order of magnitude slower than the non-interactive version.

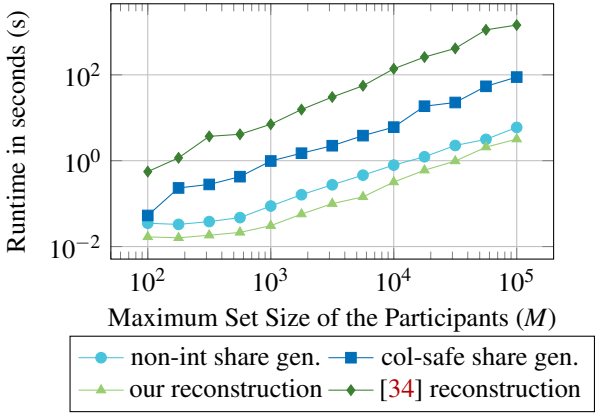
\begin{figure}[h]
    \centering
    \begin{tikzpicture}
        \begin{loglogaxis}[
            xlabel={Maximum Set Size of the Participants ($M$)},
            ylabel={Runtime in seconds (s)},
            legend style={at={(0.5,-0.3)}, anchor=north,legend columns=2},
            grid=major,
            width=0.42\textwidth,
            height=0.28\textwidth,
        ]
        \addplot[SkyBlue, mark=*] table [x index=0, y index=3, col sep=comma] {sg-comp.csv};
        \addplot[NavyBlue, mark=square*] table [x index=0, y index=2, col sep=comma] {sg-comp.csv};
        \addplot[YellowGreen, mark=triangle*] table [x index=0, y index=3, col sep=comma] {sg-vs-rc-2.csv};
        \addplot[OliveGreen, mark=diamond*] table [x index=0, y index=4, col sep=comma] {sg-vs-rc-2.csv};

        \legend{non-int share gen., col-safe share gen., our reconstruction, \cite{mahdavi2020practical} reconstruction}
        \end{loglogaxis}
    \end{tikzpicture}
    \vspace{-0.2cm}
    \caption{Comparison of reconstruction of the Aggregator and share generation of a single participant (t=3)}
    \label{fig:sg-vs-rc}
\end{figure}

We also compare the runtimes of share generation process and the reconstruction process in Figure \ref{fig:sg-vs-rc}. We see that our new-hashing algorithm shifted the bottleneck from reconstruction to share generation.

\section{Related Work}
\label{sec:related}

We discuss the previous solutions to the OT-MP-PSI problem, some other closely related problems, and related work regarding our new hashing scheme.

\subsection{Previous Solutions to OT-MP-PSI}

\begin{table*}[h]
\centering
\caption{Comparison of OT-MP-PSI Solutions}
\begin{tabular}{|>{\centering\arraybackslash}m{3.5cm}|>{\centering\arraybackslash}m{2.8cm}|>{\centering\arraybackslash}m{2.9cm}|>{\centering\arraybackslash}m{2.4cm}|>{\centering\arraybackslash}m{3.7cm}|}
\hline
\vspace{0.1cm} \textbf{Solution} & \textbf{Comp. Complexity} & \textbf{Comm. Complexity} & \textbf{Comm. Rounds} & \textbf{Collusion Resistance} \\ \hline
\vspace{0.15cm} Kissner and Song~\cite{kissner2004private} \vspace{0.05cm}  & $\mathcal{O}(N^3M^3)$ & $\mathcal{O}(N^3M)$ & $\mathcal{O}(N)$ & up to $k$ collusions \\ \hline
\vspace{0.15cm} Mahdavi et al.~\cite{mahdavi2020practical} \vspace{0.05cm}  & $\mathcal{O}(M(N \log{M}/t)^{2t})$ & $\mathcal{O}(tMNk)$ & $\mathcal{O}(1)$ & up to $k$ collusions\\ \hline
\vspace{0.15cm} Ma et al.~\cite{ma2024over} \vspace{0.05cm} & $O(N|S|)$ & $O(N|S|)$ & $\mathcal{O}(1)$ & two non-colluding server \\ \hline
\vspace{0.15cm} Ours (Non-interactive) \vspace{0.06cm} & $\mathcal{O}(t^2M \binom{N}{t})$ & $\mathcal{O}(tMN)$ & $1$ & non-colluding server \\ \hline
\vspace{0.15cm} Ours (Collusion-safe) \vspace{0.06cm} & $\mathcal{O}(t^2M \binom{N}{t})$ & $\mathcal{O}(tMNk)$ & $\mathcal{O}(1)$ & up to $k$ collusions \\ \hline
\end{tabular}
\label{tab:comparison}
\end{table*}

\subsubsection{Kissner and Song}

Kissner and Song proposed the first solution to the OT-MP-PSI problem~\cite{kissner2004private}. They use polynomial rings to represent multisets. Given a multiset $S = \{S_j\}_{1 \leq j \leq k}$, they represent it as a polynomial $f(x) = \Pi_{j=1}^{k} (x - S_j)$. With this representation, the union of two sets becomes the product of two polynomials. They use homomorphic encryption to encrypt the polynomials; then, all the participants sequentially multiply their sets to produce a polynomial that represents the union of all sets. Then, they do homomorphic derivative operations on this polynomial to find the elements that exist in at least threshold number of sets. Their solution requires a high number of communication rounds. Specifically, their protocol requires $\mathcal{O}(N)$ communication rounds and $\mathcal{O}(N^3M)$ communication complexity. Moreover, multiplying the polynomials and evaluating them for each set element has a computational complexity of $\mathcal{O}(N^2M^3)$ for each participant, totaling to $\mathcal{O}(N^3M^3)$ computation when serialized. There are three reasons why this is worse than our $\mathcal{O}(t^2M \binom{N}{t})$ complexity. First, every arithmetic operation is performed with homomorphic encryption, which is computationally expensive. Second, each polynomial multiplication depends on the previous participants' multiplication, making it less parallelizable than our reconstruction process. Finally, in the use cases we are interested, M is much larger than N, i.e, few users with large datasets, resulting in $t^2\binom{N}{t} \ll N^3M^2$. However, if the threshold $t$ is close to $N/2$ and $N$ is large enough, this assumption might not hold.

\subsubsection{Mahdavi et al.}
Mahdavi et al.~\cite{mahdavi2020practical} introduce a solution with a constant number of rounds and better communication complexity than Kissner and Song~\cite{kissner2004private}. They use Shamir's secret sharing scheme to reveal identical items over the threshold, similar to our solution. They introduce the OPR-SS protocol that we use for our collusion-safe deployment option.

Their scheme has $\mathcal{O}(M(N \log{M}/t)^{2t})$ computational complexity due to the binning technique, which we substantially improve with our new hashing scheme.

\subsubsection{Ma et al.}
Ma et al.~\cite{ma2024over} propose a solution designed for use cases involving small input sets and small domain sizes. Their approach employs two mutually non-colluding servers to reduce the OT-MP-PSI problem to a two-party computation (2PC) problem. Uniquely, in this protocol, servers can compute results for multiple thresholds at no extra client cost. However, the main limitation of this solution is its computational and communication complexity, which is $\mathcal{O}(N|S|)$, where $|S|$ is the size of the domain of the elements. Therefore, if the inputs come from a large domain, such as the IPv6 addresses in our use case, this solution would not be feasible.

\subsubsection{Threshold Private Set Intersection (TPSI)}
Although very similarly named, the OT-MP-PSI and TPSI problems are very different. We can observe this similar naming caused some confusion in the existing literature: Mohanty et al. \cite{mohanty2024quantum} incorrectly classify \cite{kissner2004private, mahdavi2020practical, bay2022practical} as TPSI protocols and compare them with other TPSI protocols.

TPSI \cite{zhao2018can} is a cryptographic problem where two or more participants holding sets of elements want to learn the intersection of their sets if the intersection size is above a certain threshold. Several solutions have been proposed for the TPSI problem \cite{zhao2018can, hallgren2017privatepool, ghosh2019communication, badrinarayanan2021multi}.

\subsubsection{Quorum-PSI}
Quorum-PSI, introduced by Chandran et al. \cite{nishanth2021}, is a less general version of the OT-MP-PSI problem. In Quorum-PSI there is a dedicated party $P_1$ (the only party with output) which has the element in the intersection also in its set, i.e., not all possible subsets of parties that exceed the threshold are considered, but only those that contain $P_1$. This difference enables more efficient solutions \cite{nishanth2021}, which are not possible for the OT-MP-PSI problem.

\subsection{Related Work Regarding the Hashing Scheme}

Although there are no previous works that directly address the OT-MP-PSI problem with a hashing scheme, there are some works that address similar problems in the context of two-party private set intersection (2P-PSI) and multi-party private set intersection (MP-PSI) protocols, which are special cases of OT-MP-PSI where $N=t=2$ for 2P-PSI and $t=N$ for MP-PSI. Both of these problems have many real-world use cases, such as path discovery in social networks~\cite{ghita2009privacy}, botnet detection~\cite{Nagaraja2010Bot}, web ad campaign performance measurement~\cite{pinkas2014faster}, or cheater detection in games~\cite{Bursztein2011Open}. Our hashing scheme efficiently extends to both problems because our computational complexity $\mathcal{O}(t^2M \binom{N}{t})$ becomes $\mathcal{O}(M)$ for $N=t=2$ and $\mathcal{O}(N^2M)$ for $t=N$ as shown in Section \ref{sec:comp}.  

\subsubsection{Cuckoo Hashing with Simple Hashing}
\textbf{Cuckoo Hashing.} Cuckoo hashing~\cite{cuckoo} is a technique to avoid hash collisions by assigning each key multiple possible locations using different hash functions. Originally designed for worst-case constant-time lookups, it has since been applied in private protocols such as private set intersection (PSI)~\cite{pinkas2018efficient, pinkas2020paxos}, private information retrieval (PIR)~\cite{ali2021Comm, angel2018pir}, oblivious RAM (ORAM)~\cite{asharov2020opt, goodrich2011privacy}, and symmetric searchable encryption (SSE)~\cite{bossuat2021sse, patel2019mitig}. While many other variants have been proposed~\cite{arbitman2009deam, kirsch2010more, fotakis2005space}, the original construction by Pagh and Rodler~\cite{cuckoo} uses two hash functions $h_0, h_1$ mapping $M$ elements to two tables $T_0, T_1$, each with $(1 + \varepsilon)M$ bins, storing at most one element per bin. When inserting $x$ into bin $h_b(x)$ in table $T_b$, if another element $y$ already occupies that bin, $y$ is evicted to $h_{1-b}(y)$ in $T_{1-b}$, with $b$ toggled each time. The process continues until no collisions remain or a relocation limit is reached.

A naive PSI approach would have both parties, Alice and Bob, use Cuckoo hashing and compare items in corresponding bins. However, an element $x$ might be placed by Alice in $h_0(x)$ but by Bob in $h_1(x)$, and comparing both bins would reveal to Bob that Alice holds an item corresponding to those two bins, breaking privacy. The solution, as used in~\cite{freedman2016efficient, pinkas2014faster, pinkas2015phasing}, has Alice use Cuckoo hashing while Bob uses simple hashing, mapping each of his items to both bins $h_0(x)$ and $h_1(x)$. Bob pads each bin with $b = O(\log M / \log \log M)$ elements, resulting in $O(M \log M / \log \log M)$ total comparisons. This construction applies only to two-party protocols.

\subsubsection{2D Cuckoo Hashing}
Pinkas et al. \cite{pinkas2018efficient} propose a new variant of Cuckoo hashing called 2D Cuckoo Hashing that reduces the number of comparisons to $O(M)$ for 2P-PSI. The construction involves two tables with two sub-tables in each table. Alice and Bob use different strategies to assign their items to the tables. These strategies are modified versions of the Cuckoo hashing insertion strategy designed in a way to ensure that if Alice and Bob have the same element, there will be a sub-table that has the element in both participants.

Although both our hashing scheme and the 2D Cuckoo hashing achieve $O(M)$ complexity for 2P-PSI, our scheme is more general and can be applied to any number of participants and any threshold $t$. 2D Cuckoo hashing is specifically designed for 2P-PSI and does not generalize to more than two participants or a threshold $t$ due to Alice and Bob's asymmetric insertion strategy. However, being this specific allows 2D Cuckoo hashing to achieve lower constants in concrete implementations for two parties.

\section{Conclusion}
\label{sec:conclusion}

In this work, we introduced a practical Over-Threshold Multiparty Private Set Intersection (OT-MP-PSI) protocol designed specifically to support collaborative network intrusion detection. Our protocol efficiently identifies IP addresses appearing across multiple institutions' logs, allowing timely detection of coordinated attacks without compromising privacy.

Previous approaches, such as those by Kissner and Song~\cite{kissner2004private} and Mahdavi et al.~\cite{mahdavi2020practical}, either required extensive communication rounds or incurred prohibitive computational costs for large datasets. Our solution addresses these limitations, achieving a computational complexity of $\mathcal{O}(t^2M\binom{N}{t})$ through a novel hashing scheme, thus making it practical for real-world network security scenarios.

Our experimental evaluation, using real-world network logs from CANARIE IDS Program\cite{canarie}, demonstrates the practicality and efficiency of our protocol. With an average runtime of just 170 seconds for analyzing data from 33 institutions and datasets containing over 144,000 IP addresses, our solution is well-suited for operational deployment in network security contexts.

In summary, our protocol provides a scalable, privacy-preserving mechanism essential for collaborative intrusion detection. Future work will explore optimizations for efficiently handling participant combinations, further enhancing performance in large-scale deployments.

\section*{Acknowledgments}
We gratefully acknowledge the support of NSERC for grants RGPIN-2023-03244, RGPIN-06587-2019, IRC-537591, the Government of Ontario, and the Royal Bank of Canada for funding this research, and CANARIE for providing the network logs used in the experiments.

\bibliographystyle{plain}
\bibliography{references.bib}

\begin{thebibliography}{10}

\bibitem{ali2021Comm}
Asra Ali, Tancr{\`e}de Lepoint, Sarvar Patel, Mariana Raykova, Phillipp Schoppmann, Karn Seth, and Kevin Yeo.
\newblock {Communication{\textendash}Computation} trade-offs in {PIR}.
\newblock In {\em 30th USENIX Security Symposium (USENIX Security 21)}, pages 1811--1828. USENIX Association, August 2021.

\bibitem{angel2018pir}
Sebastian Angel, Hao Chen, Kim Laine, and Srinath Setty.
\newblock {PIR} with compressed queries and amortized query processing.
\newblock In {\em 2018 IEEE Symposium on Security and Privacy (SP)}, pages 962--979, 2018.

\bibitem{arbitman2009deam}
Yuriy Arbitman, Moni Naor, and Gil Segev.
\newblock De-amortized {Cuckoo} hashing: Provable worst-case performance and experimental results.
\newblock In Susanne Albers, Alberto Marchetti-Spaccamela, Yossi Matias, Sotiris Nikoletseas, and Wolfgang Thomas, editors, {\em Automata, Languages and Programming}, pages 107--118, Berlin, Heidelberg, 2009. Springer Berlin Heidelberg.

\bibitem{asharov2020opt}
Gilad Asharov, Ilan Komargodski, Wei-Kai Lin, Kartik Nayak, Enoch Peserico, and Elaine Shi.
\newblock {O}pt{ORAM}a: Optimal oblivious {RAM}.
\newblock In Anne Canteaut and Yuval Ishai, editors, {\em Advances in Cryptology -- EUROCRYPT 2020}, pages 403--432, Cham, 2020. Springer International Publishing.

\bibitem{badrinarayanan2021multi}
Saikrishna Badrinarayanan, Peihan Miao, Srinivasan Raghuraman, and Peter Rindal.
\newblock Multi-party threshold private set intersection with sublinear communication.
\newblock In Juan~A. Garay, editor, {\em Public-Key Cryptography -- PKC 2021}, pages 349--379, Cham, 2021. Springer International Publishing.

\bibitem{bay2022practical}
Aslı Bay, Zekeriya Erkin, Jaap-Henk Hoepman, Simona Samardjiska, and Jelle Vos.
\newblock Practical multi-party private set intersection protocols.
\newblock {\em IEEE Transactions on Information Forensics and Security}, 17:1--15, 2022.

\bibitem{boneh2001importance}
Dan Boneh, Richard~A DeMillo, and Richard~J Lipton.
\newblock On the importance of eliminating errors in cryptographic computations.
\newblock {\em Journal of cryptology}, 14:101--119, 2001.

\bibitem{bossuat2021sse}
Ang{\`e}le Bossuat, Raphael Bost, Pierre-Alain Fouque, Brice Minaud, and Michael Reichle.
\newblock {SSE} and {SSD}: Page-efficient searchable symmetric encryption.
\newblock In Tal Malkin and Chris Peikert, editors, {\em Advances in Cryptology -- CRYPTO 2021}, pages 157--184, Cham, 2021. Springer International Publishing.

\bibitem{Bursztein2011Open}
Elie Bursztein, Mike Hamburg, Jocelyn Lagarenne, and Dan Boneh.
\newblock Open{C}onflict: Preventing real time map hacks in online games.
\newblock In {\em 2011 IEEE Symposium on Security and Privacy}, pages 506--520, 2011.

\bibitem{canarie}
Canarie.
\newblock About us, Jul 2021.
\newblock \url{https://www.canarie.ca/about/}.

\bibitem{cao2009identifying}
Jing Cao, Yu~Jin, Aiyou Chen, Tian Bu, and Z-L Zhang.
\newblock Identifying high cardinality internet hosts.
\newblock In {\em IEEE INFOCOM 2009}, pages 810--818. IEEE, 2009.

\bibitem{nishanth2021}
Nishanth Chandran, Nishka Dasgupta, Divya Gupta, Sai Lakshmi~Bhavana Obbattu, Sruthi Sekar, and Akash Shah.
\newblock Efficient linear multiparty {PSI} and extensions to circuit/quorum {PSI}.
\newblock In {\em Proceedings of the 2021 ACM SIGSAC Conference on Computer and Communications Security}, CCS '21, page 1182–1204, New York, NY, USA, 2021. Association for Computing Machinery.

\bibitem{de2006cryptographic}
Alfonso De~Gregorio.
\newblock Cryptographic key reliable lifetimes: Bounding the risk of key exposure in the presence of faults.
\newblock In {\em International Workshop on Fault Diagnosis and Tolerance in Cryptography}, pages 144--158. Springer, 2006.

\bibitem{act2000personal}
Canada Department~of Justice.
\newblock Personal information protection and electronic documents act.
\newblock \url{https://laws-lois.justice.gc.ca/eng/acts/P-8.6}, 2000.

\bibitem{european_commission_regulation_2016}
{European Commission}.
\newblock Regulation ({EU}) 2016/679 of the {European} {Parliament} and of the {Council} of 27 {April} 2016 on the protection of natural persons with regard to the processing of personal data and on the free movement of such data, and repealing {Directive} 95/46/{EC} ({General} {Data} {Protection} {Regulation}) ({Text} with {EEA} relevance), 2016.

\bibitem{fotakis2005space}
Dimitris Fotakis, Rasmus Pagh, Peter Sanders, and Paul Spirakis.
\newblock Space efficient hash tables with worst case constant access time.
\newblock {\em Theory of Computing Systems}, 38(2):229--248, Feb 2005.

\bibitem{freedman2016efficient}
Michael~J Freedman, Carmit Hazay, Kobbi Nissim, and Benny Pinkas.
\newblock Efficient set intersection with simulation-based security.
\newblock {\em Journal of Cryptology}, 29(1):115--155, 2016.

\bibitem{ghosh2019communication}
Satrajit Ghosh and Mark Simkin.
\newblock The communication complexity of threshold private set intersection.
\newblock In Alexandra Boldyreva and Daniele Micciancio, editors, {\em Advances in Cryptology -- CRYPTO 2019}, pages 3--29, Cham, 2019. Springer International Publishing.

\bibitem{goldreich}
Oded Goldreich.
\newblock {\em Foundations of Cryptography: Volume 2, Basic Applications}.
\newblock Cambridge University Press, USA, 1st edition, 2009.

\bibitem{goodrich2011privacy}
Michael~T. Goodrich and Michael Mitzenmacher.
\newblock Privacy-preserving access of outsourced data via oblivious {RAM} simulation.
\newblock In Luca Aceto, Monika Henzinger, and Ji{\v{r}}{\'i} Sgall, editors, {\em Automata, Languages and Programming}, pages 576--587, Berlin, Heidelberg, 2011. Springer Berlin Heidelberg.

\bibitem{hallgren2017privatepool}
Per Hallgren, Claudio Orlandi, and Andrei Sabelfeld.
\newblock Private{P}ool: Privacy-preserving ridesharing.
\newblock In {\em 2017 IEEE 30th Computer Security Foundations Symposium (CSF)}, pages 276--291, 2017.

\bibitem{jarecki2016}
Stanislaw Jarecki, Aggelos Kiayias, Hugo Krawczyk, and Jiayu Xu.
\newblock Highly-efficient and composable password-protected secret sharing (or: How to protect your bitcoin wallet online).
\newblock In {\em 2016 IEEE European Symposium on Security and Privacy (EuroS\&P)}, pages 276--291, 2016.

\bibitem{jasper2017us}
Scott~E Jasper.
\newblock {US} cyber threat intelligence sharing frameworks.
\newblock {\em International Journal of Intelligence and CounterIntelligence}, 30(1):53--65, 2017.

\bibitem{kamiyama2007simple}
N.~Kamiyama, T.~Mori, and R.~Kawahara.
\newblock Simple and adaptive identification of superspreaders by flow sampling.
\newblock In {\em IEEE INFOCOM 2007 - 26th IEEE International Conference on Computer Communications}, pages 2481--2485, 2007.

\bibitem{kirsch2010more}
Adam Kirsch, Michael Mitzenmacher, and Udi Wieder.
\newblock More robust hashing: {Cuckoo} hashing with a stash.
\newblock {\em SIAM Journal on Computing}, 39(4):1543--1561, 2010.

\bibitem{kissner2004private}
Lea Kissner and Dawn Song.
\newblock {\em Private and threshold set-intersection}.
\newblock School of Computer Science, Carnegie Mellon University, 2004.

\bibitem{ccpa2018}
California~State Legislature.
\newblock California consumer privacy act of 2018.
\newblock \url{https://oag.ca.gov/privacy/ccpa}, 2018.

\bibitem{li2022surveying}
Wenjuan Li, Weizhi Meng, and Lam~For Kwok.
\newblock Surveying trust-based collaborative intrusion detection: State-of-the-art, challenges and future directions.
\newblock {\em IEEE Communications Surveys \& Tutorials}, 24(1):280--305, Firstquarter 2022.

\bibitem{lincoln2004privacy}
Patrick Lincoln, Phillip Porras, and Vitally Shmatikov.
\newblock Privacy-preserving sharing and correction of security alerts.
\newblock In {\em Proceedings of the 13th Conference on USENIX Security Symposium - Volume 13}, SSYM'04, page~17, USA, 2004. USENIX Association.

\bibitem{lindell2016fast}
Yehuda Lindell.
\newblock Fast cut-and-choose-based protocols for malicious and covert adversaries.
\newblock {\em Journal of Cryptology}, 29(2):456--490, 2016.

\bibitem{liu2016identifying}
Yang Liu, Wenji Chen, and Yong Guan.
\newblock Identifying high-cardinality hosts from network-wide traffic measurements.
\newblock {\em IEEE Transactions on Dependable and Secure Computing}, 13(5):547--558, 2016.

\bibitem{locasto2005towards}
M.E. Locasto, J.J. Parekh, A.D. Keromytis, and S.J. Stolfo.
\newblock Towards collaborative security and {P2P} intrusion detection.
\newblock In {\em Proceedings from the Sixth Annual IEEE SMC Information Assurance Workshop}, pages 333--339, June 2005.

\bibitem{ma2024over}
Liju Ma, Hao Wang, Ziyu Niu, Zhi Li, Lei Wu, Xiaochao Wei, and Ye~Su.
\newblock Over-threshold multi-party private set operation protocols for lightweight clients.
\newblock {\em Computer Standards \& Interfaces}, 88:103781, 2024.

\bibitem{mahdavi2020practical}
Rasoul~Akhavan Mahdavi, Thomas Humphries, Bailey Kacsmar, Simeon Krastnikov, Nils Lukas, John~A Premkumar, Masoumeh Shafieinejad, Simon Oya, Florian Kerschbaum, and Erik-Oliver Blass.
\newblock Practical over-threshold multi-party private set intersection.
\newblock In {\em Annual Computer Security Applications Conference}, pages 772--783, 2020.

\bibitem{ghita2009privacy}
Ghita Mezzour, Adrian Perrig, Virgil Gligor, and Panos Papadimitratos.
\newblock Privacy-preserving relationship path discovery in social networks.
\newblock In Juan~A. Garay, Atsuko Miyaji, and Akira Otsuka, editors, {\em Cryptology and Network Security}, pages 189--208, Berlin, Heidelberg, 2009. Springer Berlin Heidelberg.

\bibitem{mohanty2024quantum}
Tapaswini Mohanty, Vikas Srivastava, Sumit~Kumar Debnath, Ashok~Kumar Das, and Biplab Sikdar.
\newblock Quantum secure threshold private set intersection protocol for {I}o{T}-enabled privacy-preserving ride-sharing application.
\newblock {\em IEEE Internet of Things Journal}, 11(1):1761--1772, 2024.

\bibitem{morales2023private}
Daniel Morales, Isaac Agudo, and Javier Lopez.
\newblock Private set intersection: A systematic literature review.
\newblock {\em Computer Science Review}, 49:100567, 2023.

\bibitem{Nagaraja2010Bot}
Shishir Nagaraja, Prateek Mittal, Chi-Yao Hong, Matthew Caesar, and Nikita Borisov.
\newblock {BotGrep}: Finding {P2P} bots with structured graph analysis.
\newblock In {\em 19th USENIX Security Symposium (USENIX Security 10)}, Washington, DC, August 2010. USENIX Association.

\bibitem{cuckoo}
Rasmus Pagh and Flemming~Friche Rodler.
\newblock Cuckoo hashing.
\newblock {\em Journal of Algorithms}, 51(2):122--144, 2004.

\bibitem{patel2019mitig}
Sarvar Patel, Giuseppe Persiano, Kevin Yeo, and Moti Yung.
\newblock Mitigating leakage in secure cloud-hosted data structures: Volume-hiding for multi-maps via hashing.
\newblock In {\em Proceedings of the 2019 ACM SIGSAC Conference on Computer and Communications Security}, CCS '19, page 79–93, New York, NY, USA, 2019. Association for Computing Machinery.

\bibitem{pinkas2020paxos}
Benny Pinkas, Mike Rosulek, Ni~Trieu, and Avishay Yanai.
\newblock {PSI} from {P}a{X}o{S}: Fast, malicious private set intersection.
\newblock In Anne Canteaut and Yuval Ishai, editors, {\em Advances in Cryptology -- EUROCRYPT 2020}, pages 739--767, Cham, 2020. Springer International Publishing.

\bibitem{pinkas2015phasing}
Benny Pinkas, Thomas Schneider, Gil Segev, and Michael Zohner.
\newblock Phasing: Private set intersection using permutation-based hashing.
\newblock In {\em 24th USENIX Security Symposium (USENIX Security 15)}, pages 515--530, Washington, D.C., August 2015. USENIX Association.

\bibitem{pinkas2018efficient}
Benny Pinkas, Thomas Schneider, Christian Weinert, and Udi Wieder.
\newblock Efficient circuit-based {PSI} via {C}uckoo hashing.
\newblock In Jesper~Buus Nielsen and Vincent Rijmen, editors, {\em Advances in Cryptology -- EUROCRYPT 2018}, pages 125--157, Cham, 2018. Springer International Publishing.

\bibitem{pinkas2014faster}
Benny Pinkas, Thomas Schneider, and Michael Zohner.
\newblock Faster private set intersection based on {OT} extension.
\newblock In {\em Proceedings of the 23rd USENIX Conference on Security Symposium}, SEC'14, pages 797--812, USA, 2014. USENIX Association.

\bibitem{davy2023privacy}
Davy Preuveneers and Wouter Joosen.
\newblock Privacy-preserving correlation of cross-organizational cyber threat intelligence with private graph intersections.
\newblock {\em Computers \& Security}, 135:103505, 2023.

\bibitem{reed1960polynomial}
Irving~S Reed and Gustave Solomon.
\newblock Polynomial codes over certain finite fields.
\newblock {\em Journal of the society for industrial and applied mathematics}, 8(2):300--304, 1960.

\bibitem{shamir}
Adi Shamir.
\newblock How to share a secret.
\newblock {\em Commun. ACM}, 22(11):612–613, nov 1979.

\bibitem{jan14dedup}
Jan Stanek, Alessandro Sorniotti, Elli Androulaki, and Lukas Kencl.
\newblock A secure data deduplication scheme for cloud storage.
\newblock In Nicolas Christin and Reihaneh Safavi-Naini, editors, {\em Financial Cryptography and Data Security}, pages 99--118, Berlin, Heidelberg, 2014. Springer Berlin Heidelberg.

\bibitem{welch1986error}
Lloyd~R Welch and Elwyn~R Berlekamp.
\newblock Error correction for algebraic block codes, December~30 1986.
\newblock US Patent 4,633,470.

\bibitem{zabarah2023approach}
Saif Zabarah, Omar Naman, Mohammad~A Salahuddin, Raouf Boutaba, and Samer Al-Kiswany.
\newblock An approach for detecting multi-institution attacks.
\newblock {\em Annals of Telecommunications}, pages 1--14, 2023.

\bibitem{zhao2018can}
Yongjun Zhao and Sherman~S.M. Chow.
\newblock Can you find the one for me?
\newblock In {\em Proceedings of the 2018 Workshop on Privacy in the Electronic Society}, WPES'18, page 54–65, New York, NY, USA, 2018. Association for Computing Machinery.

\bibitem{zhou2007evaluation}
Chenfeng~Vincent Zhou, Shanika Karunasekera, and Christopher Leckie.
\newblock Evaluation of a decentralized architecture for large scale collaborative intrusion detection.
\newblock In {\em 2007 10th IFIP/IEEE International Symposium on Integrated Network Management}, pages 80--89. IEEE, 2007.

\bibitem{zhou2010survey}
Chenfeng~Vincent Zhou, Christopher Leckie, and Shanika Karunasekera.
\newblock A survey of coordinated attacks and collaborative intrusion detection.
\newblock {\em computers \& security}, 29(1):124--140, 2010.

\end{thebibliography}

\appendix

\section{Hashing Scheme Additional Optimizations}
\label{opt-apdx}
\subsection{Reversing the ordering}
Instead of using a unique ordering hash function for each table, using the same ordering hash function for every two consecutive tables and reversing the ordering for even-numbered tables gives better results. The intuition behind this approach is that we know the ``unlucky'' elements in the odd-numbered tables; these are elements with big $p$ values. By reversing the order and making them ``lucky'' in the next consecutive table, we get better probabilities than re-randomizing for each table.

With this technique, the value of $p$ for a particular element will be equal to $1-p$ in the next table. Therefore, the probability of missing a particular intersection in two consecutive tables is given by
$$P(\text{fail with two tables} \mid p) \leq (1-e^{-p})(1-e^{-(1-p)})$$
We then calculate the probability of failure for any intersection similarly to the single table calculation.
\begin{align}
P(\text{fail with two tables}) &\leq \int_{0}^{1}(1-e^{-p})(1-e^{-(1-p)})dp \nonumber\\
&= 3e^{-1} - 1 \approx 0.10363 \nonumber
\end{align}
This technique reduces the required number of tables to $26$ to obtain $(3e^{-1} - 1)^{13} \approx 2^{-42.5}$ failure probability.

\subsection{Utilizing the empty bins}
\label{sec:doubleinsert}

Normally, after inserting their elements into the tables, the participants would fill the empty bins with dummy shares. However, we can utilize this wasted space to reduce the number of required tables. With this optimization, participants will do a second insertion after the first insertion for every table. They will use a different mapping hash function $h'$ for the second insertion and reverse the ordering hash function from the first insertion. While doing the second insertion, elements from the first insertion have priority in the table; in other words, if an element maps to an already occupied position, we do not make any changes to the table. 

Therefore, two events must occur for an element to be successfully put into the table in the second insertion: it must map into an empty bin and be the smallest element among the elements that map into the same bin. We already know the probability of the second event occurring in $t$ different sets; it is $e^{-(1-p)}$ due to reverse ordering. To calculate the probability of the first event, we first calculate the probability of any bin being empty after the first insertion. We do this by calculating the probability of a single element not mapping to that bin repeated $M$ times. Again, the approximation does not affect the results for $M>50$.
\[
P(shares_i[x]\ \text{is empty}) = \left(\frac{tM - 1}{tM}\right)^{|S_i|} \geq \left(\frac{tM - 1}{tM}\right)^M \approx e^{-1/t}
\]
We then calculate the probability of $s_{i,j}$ being mapped into empty bins in all $t$ sets in the second insertion, given that $s_{i,j}$ exists in $t$ different sets.
\begin{align}
P\Bigl(shares_k[h'(s_{i,j})]\ &\text{is empty}\ \forall_{k \in [t]} \mid s_{i,j} \in S_k\ \forall_{k \in [t]}\Bigr) \nonumber\\ 
&\geq \left(e^{-1/t}\right)^t = e^{-1} \nonumber
\end{align}
\noindent
Now, we multiply these two probabilities to find the second insertion success probability of a particular intersection.
$$P(\text{success in second insertion} \mid p) \geq e^{-(1-p)}\times e^{-1} = e^{p-2}$$
\noindent
Lastly, the probability of missing a particular intersection in a single table is given by the failure probability of the first insertion multiplied by the failure probability of the second insertion. Afterward, we calculate the failure probability for any intersection.
\begin{align}
P(\text{fail} \mid p) \leq& (1-e^{-p})(1-e^{p-2}) \nonumber\\
P(\text{fail}) \leq& \int_{0}^{1}(1-e^{-p})(1-e^{p-2})dp = 2e^{-2} \approx 0.2706 \nonumber
\end{align}
\noindent
This approach reduces the required number of tables to $22$ to have $\left(2e^{-2}\right)^{22} = 2^{-41.5}$ failure probability. We also explored the strategy of doing multiple insertions until the table is filled. However, for use cases where $t$ is significantly smaller than $M$ this approach results in insignificant improvements.

Finally, we can get better results by combining the two optimizations; more explicitly, we will do second insertions for every table and order reversing for every two consecutive tables. In this case, the failure probability for a particular intersection in two consecutive tables is given by

\begin{align}
&P(\text{fail with 2 tables} \mid p) \nonumber\\ 
&\leq \eqnmarkbox[gray]{farr}{(1-e^{-p})(1-e^{p-2})}\ \eqnmarkbox[gray]{sarr}{(1-e^{-(1-p)})(1-e^{-p-1})} \nonumber
\end{align}

\annotate[yshift=-1em, color=black]{below}{farr}{first table}
\annotate[yshift=-1em, color=black]{below}{sarr}{second table}
\vspace{0.5cm}

\noindent
We then calculate the failure probability for any intersection

\begin{align}
&P(\text{fail with 2 tables}) \nonumber \\
&\leq \int_{0}^{1}(1-e^{-p})(1-e^{p-2})(1-e^{-(1-p)})(1-e^{-p-1})dp \nonumber \\
&= 2e^{-1} + 2e^{-2} + 3e^{-4} - 1\approx 0.06138 \nonumber
\end{align}

The combination of these two approaches reduces the number of required tables to $20$ to obtain a failure probability of $(0.06138)^{10}= 2^{-40.3}$.
\end{document}